\newcommand{\rat}{\mathbb{Q}}
\newcommand{\diff}{\mathrm{diff}}
\newcounter{ourexamplecounter}
\newcommand{\Z}{\mathbb{Z}}
\newcommand{\removed}[1]{}
\newcommand{\eqdef}{\stackrel  {\text{def}} = }
\newcommand{\paragrafik}[1]{\vspace{0.1cm} \noindent {\bf #1.}}
\begin{document}
	\title{Minimization of semilinear automata}
	\author{Mikołaj Bojańczyk \and Sławomir Lasota}
	\institute{University of Warsaw}
	\maketitle

	\begin{abstract}
We investigate finite deterministic automata in sets
with non-homogeneous atoms: integers with successor.
As there are uncountably many deterministic finite automata in this setting,
we restrict our attention to automata with semilinear transition function.
The main results is a minimization procedure for semilinear automata.
The proof is subtle and refers to  
decidability of existential Presburger arithmetic with divisibility predicates.
Interestingly, the minimization is not obtained by the standard partition refinement
procedure, and we demonstrate that this procedure does not necessarily terminate for semilinear automata.
	\end{abstract}

	\section{Introduction}
This paper is a successor of a line of research aiming at studying models of computation in a new set theory,
namy in \emph{sets with atoms} (for motivation and a more detailed exposition see for instance~\cite{BKL11,BL12,BL12nonhomo}).

In set theory, elements of sets are other sets, organized in a well-founded way.
In this paper we work in a new set theory, where one additionally postulates an infinite set of \emph{atoms}.
Then elements of sets are either other sets, or atoms, while atoms themselves have not elements.
Examples of atoms that are often considered are \emph{equality atoms} $(\Nat, =)$, i.e., natural numbers with equality,
or \emph{total order atoms} $(\rat, \leq)$, i.e., rational number with the natural order.
In this paper we focus on \emph{integer atoms} $$(\Int, +1),$$ i.e., the integers with the successor function.

In general, atoms are an algebraic structure over some vocabulary.
The structure is typically assumed to be \emph{homogeneous}~\cite{M11}, i.e., to satisfy the following condition:
every isomorphism between finite substructures extends to an automorphism of the whole structure.
Sets with atoms have particularly good properties when atoms are a relational homogeneous structure
over a finite vocabulary. Examples are the equality atoms or the total order atoms, but not
the integer atoms.\footnote{Integer atoms are homogeneous if one weakens the definition: only isomorphisms
between \emph{finitetely generated} substructures extend to the whole structure.}

Sets with atoms were discovered in the 20ies by Fraenkel, and then investigated by Mostowski and others.
In 90ies, sets with atoms have been rediscovered
in semantics~\cite{GP02,gabbay:fountl}. The paper~\cite{BKL11} rediscovers sets with atoms
in automata theory and observes that one can naturally study different models of computation
in the new set theory. The principal difference is that the notion of finiteness in sets with atoms
is replaced by a more liberal notion of \emph{orbit-finiteness}.
This approach allows to capture in an elegant syntax-independent way some models of automata, for instance
finite memory automata of Francez and Kaminski~\cite{FK94}, or timed automata of Alur and Dill~\cite{AD94}.
The paper~\cite{BKL11} proves also the analog of Myhill-Nerode theorem for orbit-finite automata.
The minimization is effective due to a powerful finite representation theorem provided in the paper.
However, the representation theorem holds only when atoms are a homogeneous relational structure
over a finite vocabulary. 

\paragrafik{Contribution of the paper}
The present paper is a natural attempt to extend the above-mentioned results to non-homogeneous atoms. 
We investigate sets with atoms $(\Int, +1)$
and orbit-finite automata therein. However, the finite representation theorem of~\cite{BKL11}
is not applicable any more, and as one of the consequences the automaton model is far too powerful.
In particular, there is uncountably many non-equivalent orbit-finite automata in sets with the integer atoms,
thus the automata may not be finitely presented as an input to a procedure.
This indicates a necessity of a reasonable restriction of the power of automata.
Our restriction applies to transition relation of an automaton and requires this relation to be semilinear.

It turns out the under the restriction, the deterministic orbit-finite automata with the integer atoms admit a minimization procedure.
This is the main result of this paper.
The proof is a surprisingly subtle and complicated reduction to satisfiability of existential Presburger
arithmetic with divisibility. The latter problem was shown decidable in~\cite{lipshitz}.

\paragrafik{Related research}
The integer atoms exhibit significant similarity to the \emph{timed atoms}, i.e., the structure
\[
(\rat, <, +1).\footnote{
Instead of $(\rat, <, +1)$, equally-well one could consider atoms $(\mathbb{R}, <, +)$.}
\]
In~\cite{BL12} orbit-finite automata with the timed atoms have been shown to subsume timed automata~\cite{AD94}.
Moreover, effective minimization has been shown for a subclass of automata rich enough to subsume timed automata:
automata with timed atoms, with transition relation definable in FO($\rat, <, +1)$ without quantifiers.
This amounts to a more severe restriction than the semilinear restriction considered in this paper.
In fact we suppose that techniques similar to those used in this paper would also apply to semilinear automata
with timed atoms.

Other approaches to minimization of timed automata are discussed in~\cite{ACHDW92,SV96,YL97,T06}.

An extension of the representation theorem to non-homogeneous atoms has been recently formulated proved in~\cite{O12}. 
It applies both to the integer atoms and to the timed atoms. 


\section{Preliminaries}

\paragrafik{Sets with atoms}
The set of atoms is assumed to come equipped with some algebraic structure, like the rationals $\rat$ with the natural order.
The notion of \emph{sets with atoms} makes sense for any algebraic structure of atoms.
We fix in this section an arbitrary such structure, even if later on we 
will stick to the particular structure $(\Int, +1)$, the integers with successor.

The \emph{cumulative hierarchy} of sets is a sequence of sets indexed by ordinal ranks.
At any ordinal, sets of this rank are arbitrary sets whose elements are sets of smaller rank, 
or atoms. In particular, sets of rank $0$ are all subsets of atoms.
We restrict to only those sets in the cumulative hierarchy that are well-behaved in the sense described below.

The intuitive idea is that the given algebraic structure is the only relevant structure of atoms, thus we work
'up to automorphism of atoms'.
Formally, given an automorphism $\pi$ of atoms, i.e., a bijection that preserves the algebraic structure,
$\pi$ may be naturally lifted to any set in the cumulative hierarchy:  apply $\pi$ to all elements,
then to all elements of elements, etc. Application of $\pi$ to a set $X$ we denote by $X \cdot \pi$.
A set $S$ of atoms is said to \emph{support} a set $x$ if every automorphism $\pi$
being identity on $S$ preserves $x$ as well: $X \cdot \pi = X$.  
All atoms trivially support any set. As an example, consider the equality atoms and the set $x$ of 
all atoms except $3$ and $5$, which is supported by $\{3, 5\}$. If we move to the total order atoms,
the set $\{3, 5\}$ supports also the open interval $(3, 5)$.
In the sequel we are interested in finitely-supported sets,
i.e., those supported by some finite set of atoms. 

Now we are prepared to define well-behaved sets: a set in cumulative hierarchy
is \emph{hereditarily finitely-supported} if it is finitely-supported itself, all its elements are finitely-supported, etc.
A hereditarily finitely-supported set that is itself supported by the empty set of atoms we call \emph{equivariant}.
A particular but important special case is that of \emph{equivariant function}, i.e., a function $f : X \to Y$ between
two sets that commutes with any automorphism $\pi$ of atoms:
\[
f( x \cdot \pi) = f(x) \cdot \pi \qquad \text{ for all } x \in X \text{ and all automorphisms } \pi.
\]

The new set theory has a more liberal notion of finiteness than the usual one: orbit-finiteness.
For a set $S$ of atoms, the \emph{$S$-orbit} of a set $x$ is defined as follows:
\[
\{ x \cdot \pi : \pi \text{ is an automorphism of atoms being identity on } S\} .
\]
If $S$ is finite, the above set is called \emph{single-orbit}.
A set is orbit-finite if it is the union of finitely-many single-orbit sets.
Note however that the precise number of orbits of a set may vary depending on the choice of $S$.

Sets with atoms have particularly good properties if the atoms are a \emph{homogeneous} relational structure
over a finite vocabulary~\cite{M11}. Examples are equality atoms, total order atoms, or the random graph considered as a relational structure.
One of these good properties is preservation of orbit-finiteness by the Cartesian product. In consequence,
an $n$-ary relation between orbit-finite sets is orbit-finite as well.

\paragrafik{Sets with integer atoms}
From now on we only consider integer atoms; the automorphism group contains all translations $x \mapsto x + z$, for some $z \in \Int$
and is thus isomorphic to $\Int$ with addition.
This choice of atoms leads to a slightly pathological set theory.
As an example, observe that 
every set is supported by any singleton $\{z\} \subseteq \Int$ as the translation with a fixed point is necessarily
the identity. Thus all sets are in the cumulative hierarchy are hereditarily finitely-supported.
This is why we will mostly work with equivariant sets in the sequel.

The structure $(\Int, +1)$ is not homogeneous as a relational structure; in fact there is no extension of 
the structure by a finite set of relations that would make the structure homogeneous.
It is thus not surprising that integer atoms lead to a set theory that lacks certain important properties.
One of the most important problems is that orbit-finiteness is not preserved by the Cartesian product.
For instance, the atoms $\Int$ is a single-orbit set, but the product $\Int \times \Int$ has infinitely many orbits, 
namely for each $k \in \Int$, the  diagonal $\set{(i,i+k) : i \in \Int}$ is an orbit.
One of the consequences is that there are uncountably many equivariant binary relations in $\Int$.

By an equivariant isomorphism we mean a bijection between two sets that commutes with automorphisms of atoms. 
For $k \geq 0$, let $\Int_k$ denote integers modulo $k$, i.e., the set of equivalence classes of the congruence modulo $k$.
In particular $\Int_0$ is the same as $\Int$ up to equivariant isomorphism.
From the representation theorem of~\cite{O12} it follows that single-orbit sets with integer atoms are particularly simple:
\begin{lemma} \label{lem:single-orbit-sets}
Up to equivariant isomorphism, every one-orbit set with integer atoms is $\Int_k$ for some $k \geq 0$.
\end{lemma}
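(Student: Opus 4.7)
The plan is to use the fact that, for integer atoms, the automorphism group is simply $\Int$ acting by translation, and to translate the one-orbit structure of $X$ into a quotient of the integers. Let $X$ be an equivariant one-orbit set and pick any $x_0 \in X$. Since $X$ is equivariant, the full orbit $\{x_0 \cdot z : z \in \Int\}$ is contained in $X$; since $X$ is one-orbit (an $S$-orbit of some element for some finite $S$), I would argue equality. The case $S = \emptyset$ is immediate. For $S \neq \emptyset$, the observation already made in the section (a translation fixing any atom is the identity) implies that the $S$-orbit is a singleton $\{x_0\}$, and equivariance of $X$ then forces $x_0$ to be fixed by every translation, so the $\emptyset$-orbit of $x_0$ is again $\{x_0\} = X$.

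Next, I would analyse the stabilizer subgroup
\[
H = \{z \in \Int : x_0 \cdot z = x_0\}.
\]
Since $H$ is a subgroup of $(\Int, +)$, there is a unique $k \geq 0$ with $H = k\Int$.

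With this $k$ I would define $f : X \to \Int_k$ by $f(x_0 \cdot z) := [z]_k$. Well-definedness and injectivity follow from the identity $H = k\Int$; surjectivity is obvious. The map $f$ is equivariant because the action of a translation $z'$ sends $x_0 \cdot z$ to $x_0 \cdot (z+z')$ on the left-hand side and $[z]_k$ to $[z+z']_k$ on the right-hand side, which matches under $f$.

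The only step requiring care is the initial parametrisation $X = \{x_0 \cdot z : z \in \Int\}$; this is where the pathological features of integer atoms noted earlier in the section collapse the various possible shapes of a one-orbit set to the single $\emptyset$-orbit case. After that reduction, the remainder is elementary group theory: the classification of subgroups of $\Int$ and the first isomorphism theorem. The representation theorem of~\cite{O12} cited in the statement could provide an alternative abstract route, but is not strictly necessary here.
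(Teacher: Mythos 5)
Your proof is correct. Note that the paper does not actually prove this lemma at all: it obtains it as a consequence of the general representation theorem for non-homogeneous atoms from~\cite{O12}. Your argument is a direct, self-contained one specific to $(\Int,+1)$: first reduce an equivariant one-orbit set to a single $\emptyset$-orbit (using the observation, already made in the Preliminaries, that a translation with a fixed point is the identity, so $S$-orbits for $S\neq\emptyset$ are singletons and the equivariance of $X$ then forces the unique element to be fixed by all translations), then apply the orbit--stabilizer correspondence for the action of the automorphism group $\Int$, classify the stabilizer as $k\Int$ for a unique $k\geq 0$, and check that the induced bijection $X\to\Int_k$, $x_0\cdot z\mapsto [z]_k$, commutes with translations; all of these steps are carried out correctly (well-definedness and injectivity both come from $H=k\Int$, and the equivariance check is the one-line computation you give, using that the group is abelian so the order of composition is irrelevant). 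What this buys is a proof requiring nothing beyond the definitions in Section~2, at the cost of not generalizing to other atom structures the way the cited theorem does. One small caveat: the lemma as stated speaks of arbitrary one-orbit sets, while your argument assumes $X$ is equivariant; this is the right reading, since for a non-equivariant singleton $S$-orbit the notion of equivariant isomorphism degenerates, and every use of the lemma in the paper (monomial sets are products of \emph{equivariant} one-orbit sets) is for equivariant one-orbit sets.
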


\paragrafik{Automata with integer atoms}
For any atoms, one can naturally define nondeterministic and deterministic finite automata.
For instance, a NFA consists of:
\begin{itemize}
\item an orbit-finite alphabet $A$,
\item an orbit-finite set $Q$ of states,
\item a transition relation $\delta \subseteq Q \times A \times Q$,
\item subsets $I, F \subseteq Q$ of initial and accepting states.
\end{itemize}
All sets above are implicitly assumed to be hereditarily finitely-supported.
A DFA is obtained by additionally restricting the set $I$ to be a singleton and
the relation $\delta$ to be a function $Q \times A \to Q$.
Note that a NFA is equivariant iff all its components $A$, $Q$, $I$, $F$ and $\delta$ are so.

For many choices of atoms, in particular for homogeneous atoms, the notion of automaton outlined above
is very reasonable. For instance, if the alphabet is just the set of atoms, equivariant NFA with equality atoms
are expressively equivalent to the finite memory automata of Francez and Kaminski~\cite{FK94}
(called also register automata~\cite{DL09}). 
Similarly, in case of total order atoms, equivariant NFA correspond to a variant of finite memory automata with
order testing (cf.~e.g.~\cite{FHL10}).

Unfortunately,  automata with integer atoms are far too powerful, as illustrated in the example below.
\begin{example} \label{ex:worryingex}
Suppose that $K \subseteq \Int$ is any set of integers, e.g.~the positive integers that are prime numbers. Consider the following  language
\begin{eqnarray*}
	\diff(K)= \set{x_1 \ldots x_n \in \Int^* : \mbox{ for all $i \in \set{2,\ldots,n}$, } x_i-x_{i-1}\in K}.
\end{eqnarray*}
We claim that this language is recognized by an equivariant DFA with integer atoms. The automaton has: an initial state $\epsilon$, a sink error state $\bot$, as well as one state $z$ for every $z \in \Int$. It is easy to see that there are three orbits.
The transition function is
\begin{eqnarray*}
	&&\delta(\epsilon, y) = y \\
	&&\delta(z, y) = \begin{cases}
		y  & \mbox{if $y-z \in K$}\\
		\bot & \mbox{otherwise}
	\end{cases} \\
	&&\delta(\bot, y) = \bot
\end{eqnarray*}
Of course, this is not a true finite automaton, because it refers to the set $K$ in its transition relation, and the set $K$ could be anything, e.g.~undecidable. 
Note that there are uncountably many sets $K$, and each one gives a different language.
\end{example}

This is why starting from Section~\ref{sec:sl-sets} on we restrict our attention to \emph{semilinear} automata.


	\section{Semilinear sets with integer atoms}  
\label{sec:sl-sets}

A problem with the integer atoms is that the Cartesian product does not preserve orbit-finiteness. 
To overcome this difficulty we widen our interest from orbit-finite sets to polynomial sets, defined below.

\paragrafik{Polynomial sets}
For any atoms, one may use the name \emph{polynomial sets} for the smallest class of sets
that contains all equivariant orbit-finite sets and is closed under finite products and disjoint unions.
Thus a polynomial set is a finite disjoint union of \emph{monomial sets}, i.e., of finite products of equivariant one-orbit sets.

Under the integer atoms, by Lemma~\ref{lem:single-orbit-sets} every monomial set is, up to equivariant isomorphism, of the form
\begin{align}
\label{eq:monomial}
	\Int_{k_1} \times \Int_{k_2} \times \cdots \times \Int_{k_n} \qquad \mbox{where }k_1,\ldots,k_n \in \Nat.
\end{align}
Without loss of generality we only consider monomials of the form $\Int^k$ or $\Int_k$:
\begin{lemma}\label{lem:only-some-monomials}
Every polynomial set is isomorphic to a finite disjoint union
of monomials of the form $\Z^k$ or $\Z_k$, for $k \in \Nat$.
\end{lemma}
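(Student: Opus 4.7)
The plan is to reduce to a single monomial. Since a polynomial set is by definition a finite disjoint union of monomials, and since by Lemma~\ref{lem:single-orbit-sets} every monomial has the form (1), it suffices to decompose each product $\Int_{k_1} \times \cdots \times \Int_{k_n}$ as a finite disjoint union of sets of the shape $\Int^j$ or $\Int_k$. After reordering factors we may assume the monomial has the form $\Int^m \times \Int_{k_1} \times \cdots \times \Int_{k_l}$ with $m, l \geq 0$ and $k_1, \ldots, k_l \geq 1$.

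In the case $m \geq 1$, I would use the first $\Int$-coordinate as a reference point and shift away the action on the modular coordinates. Concretely, define a bijection
\[
\phi(z_1, \ldots, z_m, a_1, \ldots, a_l) \;=\; (z_1, \ldots, z_m,\; a_1 {-} z_1 \bmod k_1,\; \ldots,\; a_l {-} z_1 \bmod k_l),
\]
whose inverse adds $z_1$ back to each of the last $l$ coordinates. A direct calculation shows that $\phi$ intertwines the original translation action with the $\Int$-action on the target that translates only the first $m$ coordinates and is trivial on the remaining $l$. Partitioning the target over the $k_1 \cdots k_l$ possible values of the modular coordinates then exhibits the monomial as a finite disjoint union of copies of $\Int^m$.

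For the remaining case $m = 0$, the product $\Int_{k_1} \times \cdots \times \Int_{k_l}$ is finite and decomposes into finitely many $\Int$-orbits. A short calculation shows that each orbit has size $L := \mathrm{lcm}(k_1, \ldots, k_l)$, since the least positive $t$ fixing a given tuple is the least common multiple of the $k_j$'s (with the convention $L = 1$ when $l = 0$). Hence by Lemma~\ref{lem:single-orbit-sets} each orbit is equivariantly isomorphic to $\Int_L$, and the monomial is the disjoint union of $k_1 \cdots k_l / L$ copies of $\Int_L$.

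The only subtle step is verifying that the map $\phi$ is a well-defined equivariant bijection; the rest is bookkeeping, including the trivial edge cases $m = 0$ or $l = 0$.
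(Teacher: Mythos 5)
Your proof is correct and follows the same underlying idea as the paper's: the paper's proof simply iterates two binary identities --- $\Int_k \times \Int$ is $k$ copies of $\Int$, and a product of two finite-characteristic orbits collapses via the least common multiple --- whereas you package the whole reduction into one explicit bijection $\phi$ that uses the first $\Int$-coordinate to trivialize the action on all modular coordinates at once, plus a direct orbit count in the purely finite case. One point in your favour: the paper's second identity as stated ($\Int_k \times \Int_l \cong \Int_{\mathrm{lcm}(k,l)}$) is off by a multiplicity --- the product has $kl$ elements and splits into $\gcd(k,l)$ orbits, each isomorphic to $\Int_{\mathrm{lcm}(k,l)}$ (e.g.\ $\Int_2\times\Int_2$ is two copies of $\Int_2$, not one) --- and your count $k_1\cdots k_l/L$ gets this right. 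So your argument is not only valid but slightly more careful; the only thing to verify, as you note, is the equivariance computation for $\phi$, which is a one-line check.
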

\begin{proof}
Using Lemma~\ref{lem:single-orbit-sets} and the following identities:
\begin{itemize}
\item for $k \geq 1$ there is an equivariant isomorphism between $\Int_k \times \Int$ and the disjoint union of $k$ copies of $\Int$;
\item for $k, l \geq 1$ there is an equivariant isomorphism between $\Int_k \times \Int_l$ and $\Int_m$, where $m$ is the least
common multiplicity of $k$ and $l$.
\end{itemize}
\qed
\end{proof}

\paragrafik{Semilinear sets with integer atoms}
As we have remarked above, the set $\Int^2$, when seen as a set with  integer atoms,  is isomorphic to a disjoint union of a countably infinite number of copies of $\Int$. 
As a consequence, there are uncountably many equivariant subsets of $\Int^2$: just choose any subset of the infinitely many orbits. This means that there is no hope of algorithms working with arbitrary equivariant subsets of the monomial $\Int^2$.
This motivates us to restrict to subsets of polynomial sets that are not just equivariant, but also semilinear, as defined below.

The standard notion of semilinear sets applies to subsets of 
monomials of the form $\Int^k$. For monomials of the form $\Int_k$, 
every subset is considered to be semilinear. 
Then we extend definition to all monomials of the form:
\begin{align*}
	\Int_{k_1} \times \Int_{k_2} \times \cdots \times \Int_{k_n}
        \qquad \mbox{where }k_1,\ldots,k_n \in \Nat;
\end{align*}
a subset of such set is semilinear
if it is semilinear when translated along the isomorphism used in the proof of 
Lemma~\ref{lem:only-some-monomials}.
The property does not depend on the choice of an isomorphism.

%
\begin{definition}[Semilinear sets with integer atoms] \label{def:semilinear}
Consider a subset
\[
R \subseteq X_1 \times \ldots \times X_n
\]
of an arbitrary equivariant monomial set. $R$
is semilinear if for some $n$-tuple of equivariant isomorphisms 
\begin{align*}
(f_i : X_i \to \Int_{k_i})_{i=1\ldots n}, \qquad \mbox{where }k_1,\ldots,k_n \in \Nat,
\end{align*}
the image of $R$ along the isomorphisms is a semilinear subset of the set
\begin{align*}
	\Int_{k_1} \times \Int_{k_2} \times \cdots \times \Int_{k_n}
\end{align*}
in the sense described above.

Finally, a semilinear subset of a polynomial set is defined by choosing a semilinear subset of each of its monomials.
\end{definition}
Note certain delicacy of the above definition.
Observe that $\Int^3$, when interpreted as a set with integer atoms, 
is, similarly like $\Int^2$, isomorphic to a countable disjoint union of copies of $\Int$. 
Therefore there is an equivariant isomorphism between $\Int^2$ and $\Int^3$; 
consider for instance any bijection between orbits. 
However, a semilinear subset of $\Int^2$, translated to
$\Int^3$ via an equivariant isomorphism, is not a semilinear subset of $\Int^3$ in general.

Definition~\ref{def:semilinear} immediately yields the notion of a semilinear $n$-ary relation on orbit-finite sets.
%
 Also as a special case, we get the notion of a semilinear function between polynomial sets.

We are interested in subsets of polynomial sets (which covers the case of relations and functions) which are both semilinear and equivariant.  For instance, there is no semilinear and equivariant bijection between $\Int^2$ and $\Int^3$.

We conclude this section with the following observations useful later:
\begin{lemma} \label{lem:equivariant-orbit-finite-is-semilinear}
Every orbit-finite subset of a polynomial set is semilinear.
\end{lemma}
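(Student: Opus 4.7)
The plan is to reduce, via Lemma~\ref{lem:only-some-monomials}, to the case of a single monomial of the form $\Int^k$ or $\Int_k$, and then handle these two shapes directly. First I would decompose the ambient polynomial set as $P = M_1 \sqcup \cdots \sqcup M_m$ with each $M_j$ of the form $\Int^{k_j}$ or $\Int_{k_j}$. Every automorphism of atoms acts on $P$ preserving each summand, hence every single orbit of $P$ lies in exactly one $M_j$. Consequently, for an orbit-finite $R \subseteq P$, each intersection $R \cap M_j$ is again a finite union of orbits, and by the component-wise character of Definition~\ref{def:semilinear}, semilinearity of $R$ reduces to semilinearity of each $R \cap M_j$.

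The case $M_j = \Int_{k_j}$ is immediate: every subset of $\Int_{k_j}$ is semilinear by fiat. For $M_j = \Int^{k_j}$, the structural fact I would use is that automorphisms of $\Int$ are exactly the translations $x \mapsto x+z$, so for a finite $S \subseteq \Int$ the only translation fixing $S$ pointwise is the identity whenever $S \neq \emptyset$, while every translation fixes $\emptyset$. Hence the single-orbit subsets of $\Int^k$ come in exactly two flavours: singletons, and $\emptyset$-orbits of the form
\[
\{(a_1 + t, \ldots, a_k + t) : t \in \Int\}.
\]
A finite union of orbits is therefore a finite union of sets of these two shapes.

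Both shapes are classically semilinear on $\Int^k$: a singleton is a linear set with one base vector and no periods, while a diagonal coset can be presented as
\[
\{(a_1, \ldots, a_k) + n \cdot (1, \ldots, 1) + n' \cdot (-1, \ldots, -1) : n, n' \in \Nat\},
\]
a linear set with two opposite period vectors. Since finite unions of semilinear sets are semilinear, each $R \cap M_j$ is semilinear, completing the argument.

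The only subtlety worth flagging, rather than an obstacle, is juggling two conventions at once: orbit-finiteness allows the supports of the constituent orbits to vary (so both singleton and diagonal orbits may appear simultaneously in $R$), and the classical definition of semilinear sets uses $\Nat$-coefficients, which is why a diagonal must be presented with a pair of opposite period vectors rather than a single $\Int$-indexed one.
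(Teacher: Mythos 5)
Your proof is correct. The paper states this lemma without any proof, so there is no official argument to compare against; your route --- normalize the monomials via Lemma~\ref{lem:only-some-monomials}, observe that with integer atoms a single orbit in $\Int^k$ is either a singleton (nonempty support forces the identity translation) or a full diagonal coset $\{(a_1+t,\ldots,a_k+t): t\in\Int\}$, and exhibit both as linear sets --- is the natural one the authors evidently had in mind, and the detail about encoding the $\Int$-indexed diagonal with two opposite $\Nat$-period vectors is exactly the right point to flag.
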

\begin{corollary} \label{cor:equivariant-function-between-single-orbits}
Every equivariant function between single-orbit sets is semilinear.
\end{corollary}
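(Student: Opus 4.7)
The plan is to reduce directly to Lemma~\ref{lem:equivariant-orbit-finite-is-semilinear}. Given an equivariant function $f : X \to Y$ between single-orbit sets, I would consider its graph
\[
G \ =\ \{(x, f(x)) : x \in X\} \ \subseteq\ X \times Y.
\]
The product $X \times Y$ is a polynomial set (a product of two equivariant single-orbit sets, hence a monomial), so if I can show that $G$ is orbit-finite, Lemma~\ref{lem:equivariant-orbit-finite-is-semilinear} delivers that $G$ is semilinear, which by Definition~\ref{def:semilinear} is exactly what it means for $f$ to be a semilinear function.

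The heart of the proof is to show that $G$ is in fact a single orbit. Equivariance of $f$ gives, for any automorphism $\pi$ of atoms, that $\pi$ applied componentwise to $(x, f(x))$ yields $(\pi(x), f(\pi(x)))$, which still lies in $G$; so $G$ is equivariant. Fixing any basepoint $x_0 \in X$ and using that $X$ is a single orbit, every $x \in X$ can be written as $\pi(x_0)$ for some $\pi$, and hence every element of $G$ has the form $\pi \cdot (x_0, f(x_0))$. Therefore $G$ coincides with the single orbit of $(x_0, f(x_0))$, and in particular is orbit-finite.

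The only subtlety I anticipate is that the ambient set $X \times Y$ need not itself be orbit-finite: for example, when both single-orbit sets are $\Int$, the product $\Int \times \Int$ has infinitely many orbits, as noted after Lemma~\ref{lem:single-orbit-sets}. However, what Lemma~\ref{lem:equivariant-orbit-finite-is-semilinear} requires is only that the ambient set be polynomial and that the subset $G$ be orbit-finite, both of which have been verified. I do not foresee any further obstacle beyond this bookkeeping step.
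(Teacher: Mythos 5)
Your proof is correct and takes exactly the route the paper intends: the corollary is stated as an immediate consequence of Lemma~\ref{lem:equivariant-orbit-finite-is-semilinear}, obtained by observing that the graph of an equivariant function between (equivariant) single-orbit sets is a single orbit inside the monomial set $X \times Y$, hence an orbit-finite subset of a polynomial set. Your remark that the ambient product need not be orbit-finite, while only the graph needs to be, is precisely the relevant bookkeeping point.
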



\section{Semilinear automata and their decision problems}
\label{sec:sl-automata}

Consider the integer atoms.
A NFA is called  \emph{semilinear} if its transition relation is semilinear.

\paragrafik{Representation}
In this section, we study semilinear automata and their decision problems. To speak of  decision problems,  we should first explain how a semilinear automaton is presented as the input to an algorithm.
A finite representation of such automata is easily deducible from our knowledge collected by now.
Basing on Lemma~\ref{lem:single-orbit-sets}, we assume that every orbit is literrally $\Z_k$, and thus 
may be represented by the number $k$. An orbit-finite set is represented as a multiset of orbits.
Transition relation of a semilinear automaton is represented separately for every monomial set.

We start with the observation that we must restrict ourselves to equivariant automata only,
as non-equivariant automata have undecidable emptiness, even in deterministic case:

\begin{theorem}\label{thm:emptiness-undecidable-for-Z-DFA}
	Emptiness is undecidable for semilinear DFA with the integer atoms.
\end{theorem}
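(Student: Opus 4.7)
The plan is to reduce from the halting problem for deterministic two-counter Minsky machines, which is well known to be undecidable. Given such a machine $M$ with finite control $Q_\mathrm{fin}$ and counters $c_1, c_2 \in \Nat$, I will construct a semilinear DFA $A_M$ over the singleton alphabet $\{*\}$ whose language is nonempty if and only if $M$ halts starting from $c_1 = c_2 = 0$.

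The key idea is to pack the two counters into one integer atom via a G\"odel encoding: represent the pair $(c_1, c_2)$ by the integer $x = 2^{c_1} \cdot 3^{c_2}$. The state set of $A_M$ is $Q_\mathrm{fin} \times \Z$, which is a finite disjoint union of copies of $\Z$ and hence orbit-finite; the initial state is $(q_0, 1)$ and the accepting states are all pairs of the form $(q_\mathrm{halt}, x)$. Because the alphabet is trivial, the DFA has a unique infinite trajectory from the initial state, and non-emptiness amounts to asking whether this trajectory ever reaches an accepting state.

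Each Minsky instruction is then translated into a semilinear update of the pair $(q, x)$. An increment of $c_1$ or $c_2$ corresponds to $x \mapsto 2x$ or $x \mapsto 3x$, and the graph of multiplication by a fixed constant is a linear subset of $\Z \times \Z$. A zero-test on $c_i$ branches according to whether $x$ is divisible by the corresponding prime, and divisibility by a fixed constant is defined by a union of arithmetic progressions, hence is semilinear. A decrement of $c_i$ in the non-zero branch corresponds to exact division by $2$ or $3$, whose graph $\{(x, y) : x = 2y\}$ is again semilinear. Assembling these rules gives a well-defined deterministic transition function $\delta : Q \times \{*\} \to Q$ whose graph is a finite union of semilinear sets, and therefore semilinear. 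By construction the trajectory simulates $M$ step by step, so $L(A_M) \neq \emptyset$ iff $M$ halts, which yields undecidability.

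The main obstacle is the verification that the arithmetic primitives used above are genuinely semilinear in the sense of Definition~\ref{def:semilinear}. The subtlety is that the atoms are $(\Int, +1)$ rather than $(\Nat, +1)$, so each classical linear set with nonnegative coefficients covers only a half-line of $\Z$; conditions such as $y = 2x$, $x = 2y$, and $x \equiv 0 \pmod 2$ must be written as finite unions of linear sets to cover both signs. I also note that the resulting automaton is necessarily non-equivariant --- divisibility and zero tests break the translation symmetry of $\Z$ --- which is exactly the situation this theorem warns against and that motivates the restriction to equivariant automata in the remainder of the paper.
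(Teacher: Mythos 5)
Your proposal is correct and follows essentially the same route as the paper: a reduction from the halting problem for two-counter machines via a G\"odel encoding of the counters as $2^{c_1}3^{c_2}$, exploiting the fact that a non-equivariant semilinear transition function can implement multiplication, exact division, and divisibility tests by fixed constants. The only cosmetic difference is that the paper folds the finite control into the integer itself (coding a configuration as $n\cdot 2^j 3^k + i$ so the state space is a single copy of $\Int$), whereas you keep it as a separate finite component, which changes nothing of substance.
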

\begin{proof}
Recall that every set is finitely supported, namely supported by every singleton subset of atoms.
In consequence, the transition function $\delta : Q \times A \to Q$
of a semilinear automaton may be an arbitrary semilinear function.
We claim that DFAs with semilinear transition function may simulate deterministic two-counter machines.

	Consider a deterministic two-counter machine with zero tests, which has $n$ states. A configuration of this machine can be seen as an element of 
	\begin{align*}
		\set{0,\ldots,n-1} \times \Nat \times \Nat.
	\end{align*}
Denote by $succ$ the function which maps a configuration to its successor. It is well known that the following question is undecidable:
\begin{quote}
	Given $x$ and $y$, decide if there is some $m$ such that $y = succ^m(x)$?
\end{quote}
This question is undecidable even for a fixed machine, and of course also undecidable when the machine is part of the input.
Consider a G\"odel coding of configurations as numbers defined by
\begin{align*}
	f(i,j,k) = n \cdot (2^j \cdot 3^k) + i.
\end{align*}
Under this coding, the successor function is semilinar. More precisely, there is  a semilinear function $g : \Int \to \Int$ such that for every configuration $x=(i,j,k)$,
\begin{align*}
	f(succ(x)) = g(f(x)).
\end{align*}
Therefore, it is  undecidable if there is some $m$ such that $g^m(f(x))=f(y)$. 

Define a semilinear DFA, with a singleton input alphabet, as follows.
Its states are $\Int$, the initial and accepting states are $f(x)$ and $f(y)$, and the transition function
updates state $n$ to $g(n)$ ignoring input.
The automaton is nonempty if and only if the two-counter machine halts.
Thus emptiness of semilinear DFA is undecidable.
\qed
\end{proof}

Somehow surprisingly, equivariance of transition relation makes emptiness easily decidable:

\begin{theorem}\label{thm:emptiness-decidable-for-Z-DFA}
	Emptiness is decidable for semilinear equivariant NFA with the integer atoms.
\end{theorem}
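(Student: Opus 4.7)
The plan is to reduce emptiness of the automaton to reachability in a finite auxiliary graph whose nodes are the orbits of $Q$, with edges decided by finitely many satisfiability queries in Presburger arithmetic. The key point is that equivariance collapses the per-state reachability problem to a per-orbit one. Decompose $Q$ into its orbits $Q_1, \ldots, Q_m$. Since $I$ and $F$ are equivariant subsets of $Q$, each is a union of orbits: if $q \in I$ and $\pi$ is any atom-automorphism then $q \cdot \pi \in I$, so the whole orbit of $q$ lies in $I$; similarly for $F$. The same argument applied to a finite run shows that the set $\mathrm{Reach}(I)$ of states reachable from $I$ is equivariant, hence also a union of orbits: if $q_0 \in I$ reaches $q$ along a word $w \in A^*$ then $q_0 \cdot \pi \in I$ reaches $q \cdot \pi$ along $w \cdot \pi$.

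Build a directed graph $G$ on nodes $Q_1, \ldots, Q_m$ with an edge $Q_i \to Q_k$ whenever
\[
\delta \cap (Q_i \times A_j \times Q_k) \neq \emptyset
\]
for some orbit $A_j$ of $A$. Equivariance of $\delta$ implies that a single witness in this intersection gives, after translation by automorphisms, a transition from every $q \in Q_i$ into $Q_k$; iterating this observation, once $Q_i$ is wholly reachable then so is every orbit reached from it. It follows that an orbit $Q_k$ is reachable from $I$ iff it is reachable in $G$ from some orbit of $I$, and the automaton is non-empty iff some orbit contained in $F$ is reachable in $G$ from an orbit contained in $I$.

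Each edge condition is decidable: $\delta \cap (Q_i \times A_j \times Q_k)$ is a semilinear subset of a monomial of the form $\Int_{k_1} \times \Int_{k_2} \times \Int_{k_3}$ which is supplied explicitly in the input representation, and non-emptiness of such a set reduces to satisfiability of an existential Presburger formula, which is decidable. Since there are only finitely many triples $(i,j,k)$ and $G$ is finite, the whole procedure is effective. The only step requiring any real argument is checking that the reachability closure is orbit-closed and that edge existence in $G$ faithfully reflects point-to-point reachability between orbits; once equivariance is exploited in this way, no deeper machinery is needed, and the argument is much lighter than the techniques required for the minimization theorem to follow.
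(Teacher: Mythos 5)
Your proof is correct and is essentially the paper's argument in a different guise: the paper iterates $Q_0=I$, $Q_n = Q_{n-1}\cup\delta(Q_{n-1},A)$ and uses the fact that equivariant subsets of $Q$ are unions of the finitely many orbits to get stabilization, which when unfolded is exactly breadth-first search on your orbit graph. Both versions rest on the same two observations --- the reachable set is equivariant, hence a union of orbits, and each orbit-to-orbit step is checkable by an existential Presburger (indeed, semilinear non-emptiness) query --- so nothing further is needed.
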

\begin{proof}
	Consider an automaton, with input alphabet $A$, states $Q$, initial states $I\subseteq Q$, accepting states $F$, and transition relation $\delta \subseteq Q \times A \times Q$. Consider the sets
	\begin{align*}
		Q_0  \eqdef I \qquad\mbox{and}\qquad Q_{n} \eqdef Q_{n-1} \cup \delta(Q_{n-1},A) \subseteq Q \mbox{ for }n \ge 1.
	\end{align*}
		It is easy to see that  $Q_n$ is the set of states that  can be reached after reading an input of length at most $n$, and that $Q_n$  can be computed based on $Q_{n-1}$ using Presburger arithmetic. 
Since the set $Q_n$ is an equivariant subset of $Q$,  there are finitely many possibilities for $Q_n$, and therefore the chain  $Q_0 \subseteq Q_1 \subseteq \cdots$ must stabilize at some point.  If it stabilizes without containing an accepting state, the automaton is empty, otherwise the automaton is nonempty.
\qed
\end{proof}

The above theorem makes semilinear equivariant NFA look deceptively simple. The following result illustrates that even in the  deterministic case, semilinear equivariant automata are dangerously close to undecidability. We use below a term \emph{constant word} for any word of the form $x^n$, for some $x \in \Int$ and $n \geq 0$.

\begin{theorem}\label{thm:zero-inputs-undecidable}
	The following problem is undecidable:
	\begin{itemize}
		\item {\bf Input.} A semilinear equivariant DFA with the integer atoms,  with input alphabet $\Z$.
		\item {\bf Question.} Does the automaton accept some constant word? 
	\end{itemize}
\end{theorem}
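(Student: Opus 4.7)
The plan is to reduce from the halting problem for deterministic two-counter machines, as in the proof of Theorem~\ref{thm:emptiness-undecidable-for-Z-DFA}, but now engineered so that the resulting DFA is equivariant. The main obstacle to porting the earlier construction is precisely equivariance: the previous automaton applied the hard-coded update $n \mapsto g(n)$ on $\Int$ and tested equality with a fixed integer $f(y)$, whereas an equivariant DFA can use only translation-invariant tests, and in particular the only equivariant subsets of the single-orbit component $\Int \subseteq Q$ are $\emptyset$ and all of $\Int$. The trick will be to treat the single input letter $x$ of a candidate constant word $x^N$ as a global \emph{offset}: the simulation is shifted by $x$, so every update and test depends only on the translation-invariant difference $n - x$, and acceptance is routed through a constant (singleton-orbit) state.

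Concretely, fix a deterministic two-counter machine with initial configuration $x_0$ and halting configuration $y_0$, and let $f$ and $g$ be the G\"odel coding and its semilinear successor function from the proof of Theorem~\ref{thm:emptiness-undecidable-for-Z-DFA}. I would construct a DFA with orbit-finite state set $Q = \Int \sqcup \{q_0, q_{\mathrm{acc}}\}$, input alphabet $A = \Int$, initial state $q_0$, accepting set $\{q_{\mathrm{acc}}\}$, and transitions
\begin{align*}
\delta(q_0, x) &= f(x_0) + x,\\
\delta(n, x) &= \begin{cases} q_{\mathrm{acc}} & \text{if } n - x = f(y_0),\\ g(n-x) + x & \text{otherwise,}\end{cases}\\
\delta(q_{\mathrm{acc}}, x) &= q_{\mathrm{acc}}.
\end{align*}
Equivariance follows because each case depends only on $n - x$ and updates are by the offset $x$; semilinearity is checked monomial by monomial, using that the graph of $g$ is semilinear and that the linear map $(n, x, m) \mapsto (n - x, m - x)$ pulls semilinear subsets of $\Int^2$ back to semilinear subsets of $\Int^3$, after which intersection and complementation with linear constraints stay within semilinear sets.

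A straightforward induction shows that on input $x^k$ with $k \geq 1$ the automaton is either in state $f(\mathit{succ}^{k-1}(x_0)) + x$ or has already entered the sink $q_{\mathrm{acc}}$, and the latter eventually occurs iff $\mathit{succ}^m(x_0) = y_0$ for some $m$. Hence the constructed automaton accepts some constant word iff the two-counter machine halts from $x_0$, which is undecidable.
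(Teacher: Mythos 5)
Your construction is correct and is essentially the paper's own proof: the paper likewise reduces from two-counter machine halting by using the input letter as a translation offset, storing $f(\mathit{succ}^{k-1}(x))+i$ in the $\Int$-orbit and branching on the translation-invariant difference between state and letter. The only differences are cosmetic (the paper tests $g(i-j)=f(y)$ one step earlier and sends its accepting state $\top$ to a sink $\bot$ rather than making the accepting state itself a sink), and neither affects the reduction.
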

\begin{proof}
By reduction of the halting problem for two-counter machines.
We will use the same notation as in the proof of Theorem~\ref{thm:emptiness-undecidable-for-Z-DFA}. In particular we will make use of the G\"odel encoding $f$ of configurations of a two-counter machine with zero tests and of the semilinear function $g: \Int \to \Int$
that encodes the transition function of the machine.

Given a machine, define a semilinear equivariant automaton, with input alphabet $\Int$, as follows. Its states are 
\begin{align*}
	\Int \cup \set{\epsilon,\top,\bot},
\end{align*}
where $\epsilon$, $\top$ and $\bot$ have singleton orbits. The initial state is $\epsilon$, and the only accepting state is $\top$. For the initial state, the transition function is defined by
\begin{align*}
	\delta(\epsilon,i) = f(x)+i \in \Int.
\end{align*}
For a state $i \in \Int$, the transition function is defined by
\begin{align*}
	\delta(i,j) = \begin{cases} \top &\mbox{if }g(i-j) = f(y)\\
		j+g(i-j) &\mbox{otherwise}
	\end{cases} 
\end{align*}
The state $\top$ leads to $\bot$ on every input, and $\bot$ is a sink state:
\begin{align*}
	\delta(\top,i) = \bot \qquad \delta(\bot,i) = \bot.
\end{align*}
It is not difficult to see that $\delta$ is equivariant, and that an input of the form $0^m$ is accepted if and only if the two-counter machine goes from configuration $x$ to configuration $y$ in exactly $m$ steps.
An equivariant automaton accepts a constant word if and only if it accepts a word $0^m$ for some $m$.
This completes the proof.
\qed
\end{proof}

\begin{corollary}
	It is undecidable if two  semilinear equivariant DFA have nonempty intersection.
\end{corollary}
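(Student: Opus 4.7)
The plan is to reduce the previous theorem (undecidability of ``does the automaton accept some constant word?'') to the intersection nonemptiness problem. The key idea is that the set of all constant words over alphabet $\Int$ is itself recognised by a semilinear equivariant DFA. Once we exhibit such an automaton $\mathcal B$, then for any semilinear equivariant DFA $\mathcal A$ with input alphabet $\Int$ we have $L(\mathcal A)\cap L(\mathcal B)\neq\emptyset$ iff $\mathcal A$ accepts some constant word, so decidability of intersection nonemptiness would contradict Theorem~\ref{thm:zero-inputs-undecidable}. Note that we do not need to build a product automaton — we just need the pair of inputs $(\mathcal A,\mathcal B)$ — which is convenient, because we do not have to worry about whether the product construction preserves the semilinearity conditions of Definition~\ref{def:semilinear}.

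The automaton $\mathcal B$ for constant words has state set $Q=\{\epsilon\}\cup\Int\cup\{\bot\}$, which is polynomial (three orbits), with $\epsilon$ initial, $\bot$ a sink, and accepting set $F=\{\epsilon\}\cup\Int$. The transition function is
\begin{align*}
\delta(\epsilon,y)=y,\qquad
\delta(x,y)=\begin{cases}x & \text{if }y=x\\ \bot &\text{otherwise}\end{cases},\qquad
\delta(\bot,y)=\bot.
\end{align*}
A word $y_1\cdots y_n$ is accepted precisely when $n=0$ or $y_1=\cdots=y_n$, i.e.\ when it is a constant word.

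It remains to check that $\mathcal B$ is equivariant and semilinear. Equivariance follows by inspection: every clause of $\delta$ commutes with the translation action of $\Int$ on $Q$ and on the alphabet, since the predicates ``$y=x$'' and ``$y\neq x$'' are translation-invariant. Semilinearity of $\delta$ is checked monomial by monomial of $Q\times\Int$: on $\{\epsilon\}\times\Int$ the image lies in the orbit $\Int$ and the graph is the diagonal $\{(y,y):y\in\Int\}\subseteq\Int\times\Int$, which is semilinear; on $\Int\times\Int$ the graph splits into the semilinear set $\{(x,y,x):x=y\}\subseteq\Int^3$ for the $\Int$-valued image and the semilinear set $\{(x,y):x\neq y\}\subseteq\Int^2$ for the $\bot$-valued image; the case of $\{\bot\}\times\Int$ is trivial.

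With $\mathcal B$ in hand, the reduction is immediate: an oracle for intersection nonemptiness of two semilinear equivariant DFA applied to $(\mathcal A,\mathcal B)$ decides whether $\mathcal A$ accepts some constant word, contradicting Theorem~\ref{thm:zero-inputs-undecidable}. There is no essential obstacle — the only subtle point is verifying the semilinearity clauses for $\mathcal B$, and this is straightforward because each relevant subset of $\Int^k$ is cut out by a Boolean combination of linear equalities and inequalities.
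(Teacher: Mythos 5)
Your proposal is correct and follows exactly the paper's argument: the paper's entire proof is the observation that the language of constant words is recognized by a semilinear equivariant DFA, which together with Theorem~\ref{thm:zero-inputs-undecidable} gives the reduction. Your explicit construction of that automaton and the verification of its equivariance and semilinearity are accurate.
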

\begin{proof}
	The language of all constant words 	is easily seen to be recognized by a semilinear equivariant DFA.
\qed
\end{proof}

Observe how the delicacy of the decidability border between Theorem~\ref{thm:emptiness-decidable-for-Z-DFA} and Theorem~\ref{thm:zero-inputs-undecidable}. Consider a semilinear equivariant DFA with states $Q$. For $n \in \Nat$, define $P_n \subseteq Q$ to be the set of states that can be reached after reading a constant word of length $n$.   
It is not difficult to see that $P_n$ is an equivariant subset of $Q$, and therefore there are finitely many possibilities for $P_n$. However, by Theorem~\ref{thm:zero-inputs-undecidable}, it is undecidable if there is some $n$ such that $P_n$ contains an accepting state.


\section{Minimizing semilinear equivariant automata}
In this section we turn to the problem of minimizing DFAs.
For any choice of atoms, in particular for the integer atoms, every DFA has an equivalent minimal DFA.
We start by proving that when one starts with a semilinear equivariant DFA,  
the minimization operation stays in the realm of semilinear DFAs.
\begin{theorem}
If a language  $L \subseteq A^*$ is recognized by a semilinear equivariant DFA with the integer atoms
then its minimal automaton is also a semilinear equivariant DFA.
\end{theorem}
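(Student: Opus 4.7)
The plan is to exhibit the minimal DFA as the quotient of the given DFA $\mathcal{A} = (A, Q, \delta, q_0, F)$ by its Myhill--Nerode equivalence $\sim$, where $q \sim q'$ iff $q$ and $q'$ recognise the same language. Two things need to be verified: that $\sim$ is equivariant, and that $\sim$ is a semilinear subset of $Q \times Q$. Equivariance is essentially formal: the language $L$ is equivariant because $\mathcal{A}$ is, and $\sim$ is definable from $L$, so $\sim$ commutes with atom automorphisms. Once $\sim$ is known to be semilinear and equivariant, the quotient $Q/{\sim}$ inherits a polynomial-set structure (its orbits descend from those of $Q$), and the induced transition function and accepting set are images of semilinear objects along the quotient map, hence semilinear. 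The entire reduction thus collapses onto the one statement: $\sim$ is semilinear.

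Partition refinement is unavailable, as the abstract warns, so I would attack $\sim$ via its complement. A pair $(q,q')$ fails to be equivalent iff there exists a word $w$ such that $\delta^*(q,w) \in F$ and $\delta^*(q',w) \notin F$ or vice versa. I would handle this orbit-block by orbit-block: for each pair of monomial components $M, M' \subseteq Q$ (each of the form $\Z^k$ or $\Z_k$ by Lemma~\ref{lem:only-some-monomials}), I aim to show that the set
\[
\{(q,q') \in M \times M' : q \not\sim q'\}
\]
is semilinear. The idea is to describe a distinguishing run existentially: guess a length $n$, a sequence of input letters, and the intermediate states; the constraints linking them are semilinear because $\delta$ is. Plain existential Presburger cannot handle such a quantification, since $n$ is unbounded and runs through semilinear transitions mix linear and modular behaviour; the point of introducing divisibility is precisely to encode the $\Z_k$-components uniformly in $k$ and to encode powers of linear update maps. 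Invoking decidability of existential Presburger with divisibility from~\cite{lipshitz}, together with the standard fact that definable sets in that fragment are semilinear, then yields semilinearity of $\not\sim$, and hence of $\sim$.

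The main obstacle, and the place where almost all of the technical work must live, is the encoding step: expressing "there exists a distinguishing word" as an existential Presburger-with-divisibility formula over the parameters of $(q,q')$. The delicacy is demonstrated by Theorem~\ref{thm:zero-inputs-undecidable}: the seemingly innocent variant "there exists a constant word accepted" is already undecidable, so any encoding must exploit the freedom to choose distinct letters at different positions, which is exactly what breaks the reduction in Theorem~\ref{thm:zero-inputs-undecidable}. I expect this to require a normal-form argument showing that if some word distinguishes $q$ and $q'$ then one can be chosen whose successive letters follow a semilinear pattern in the run index, so that the run is governed by iterating a fixed semilinear map and its $n$-th iterate is uniformly definable using divisibility. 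Once this combinatorial core is in place, the quotient construction, equivariance, and semilinearity of the resulting $\delta$ and $F$ on $Q/{\sim}$ are straightforward bookkeeping.
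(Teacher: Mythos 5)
There is a genuine gap, and also a misreading of what the theorem asks for. The statement is purely structural, not algorithmic: you only need to exhibit the minimal automaton as a semilinear object, and the paper does this in a few lines. The quotient map $f : Q \to P$ onto the states of the syntactic automaton is equivariant (as you note), and the decisive observation you are missing is Corollary~\ref{cor:equivariant-function-between-single-orbits}: \emph{every} equivariant function between single-orbit sets is automatically semilinear, because its graph is a single orbit of $Q\times P$ restricted to a pair of orbits, hence orbit-finite, hence semilinear by Lemma~\ref{lem:equivariant-orbit-finite-is-semilinear}. From this, the transition function of the minimal automaton, which equals the relation $\{(f(q),a,f(\delta(q,a))) : q\in Q,\ a\in A\}$, is semilinear because $f$ and $\delta$ are. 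No analysis of distinguishing words, no quantifier elimination, and no appeal to \cite{lipshitz} is needed; the Myhill--Nerode relation $\sim$ itself, being $\{(q,q'): f(q)=f(q')\}$, is then semilinear for free. Note that equivariance of $\sim$ alone would \emph{not} suffice (equivariant subsets of $\Int^2$ can be wildly non-semilinear, as the paper stresses); the leverage comes from $f$ being an equivariant function into an \emph{orbit-finite} set.

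Your proposed route has two independent defects. First, its core step --- the normal form for distinguishing words and their encoding as an EPAD formula --- is only conjectured (``I expect this to require\ldots''), so the proposal is a program, not a proof; this is precisely the hard content that the paper instead spends Section~\ref{sec:arithmetic} on, and only for the \emph{algorithmic} problem (Theorem~\ref{thm:decide-semilinear-automaton-minimal}), not for the present structural statement. Second, and more fatally, the inference ``definable in existential Presburger with divisibility $\Rightarrow$ semilinear'' is false: the divisibility relation $\{(x,y) : x \mid y\}$ is itself not semilinear, so Theorem~\ref{thm:epad} buys you decidability of satisfiability, never semilinearity of the defined set. Even a complete encoding of $\not\sim$ in EPAD would therefore not deliver the conclusion you need.
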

\begin{proof}
	Let $\Aa$ be a semilinear equivariant DFA recognizing $L$. Let $Q$ be the state space of $\Aa$ and let $P$ be the state space of the syntactic automaton, which is always equivariant. Let $f : Q \to P$ be the mimimizing function, again always equivariant. 
	By Corollary~\ref{cor:equivariant-function-between-single-orbits} we know that $f$ is semilinear.
	Let $\delta : Q \times A \to Q$ be the transition function of the automaton $\Aa$, which is semilinear by assumption. Our goal is to show that the transition function $\gamma: P \times A \to P$ of the syntactic automaton is semilinear. The function $\gamma$ is the same as the relation
	\begin{align*}
		\set{ (f(q),a,f(\delta(q,a))): q \in Q, a \in A},
	\end{align*}
	which can be easily seen to be semilinear, assuming that $\delta$ and $f$ are.
\qed
\end{proof}

The theorem, however, says nothing about computing the minimal automaton. It even says nothing about deciding if an automaton is already minimal. 
\begin{proposition}\label{prop:minim-equiv}
	As far as decidability is concerned, deciding minimality is equivalent to computing the minimal automaton.
\end{proposition}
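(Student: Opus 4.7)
The forward direction (computing the minimal automaton yields a minimality decider) is immediate: given $\mathcal{A}$, compute its minimal equivalent $\mathcal{A}_{\min}$, and observe that $\mathcal{A}$ is minimal iff the orbit multisets of $Q_{\mathcal{A}}$ and $Q_{\mathcal{A}_{\min}}$ coincide. By Lemma~\ref{lem:single-orbit-sets}, an equivariant orbit-finite set is determined up to equivariant isomorphism by its multiset of orbits $\Z_k$, so this comparison is trivial.

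For the converse, assume a decider for minimality. Given $\mathcal{A}$ with state space $Q$ and transition function $\delta$, I would enumerate semilinear equivariant equivalence relations ${\sim} \subseteq Q \times Q$; each such $\sim$ has a finite Presburger presentation, and both equivariance and being an equivalence relation are Presburger-decidable conditions. For each candidate $\sim$, first check that it is a congruence for $\mathcal{A}$, i.e., that $q \sim q'$ implies both $\delta(q,a) \sim \delta(q',a)$ for every $a \in A$ and $q \in F \Leftrightarrow q' \in F$; these are universal implications between semilinear predicates, decidable by Presburger arithmetic. Then build the quotient DFA $\mathcal{A}/{\sim}$, still a semilinear equivariant DFA recognizing $L(\mathcal{A})$, and feed it to the hypothesized decider. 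Return the first $\mathcal{A}/{\sim}$ found to be minimal.

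Termination rests on the semilinearity of the Myhill--Nerode congruence $\sim_{\mathrm{MN}}$ on $Q$. By the preceding theorem, the syntactic map $f\colon Q \to P$ is semilinear; hence ${\sim_{\mathrm{MN}}} = \{(q,q') : f(q) = f(q')\}$, being the preimage of the diagonal under the semilinear map $(f,f)$, is semilinear as well. Its quotient is (isomorphic to) the minimal automaton, so the enumeration is guaranteed to produce a minimal quotient, which is exactly the object we wanted to compute.

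The main obstacle I anticipate is the effective construction of $\mathcal{A}/{\sim}$ as a concrete semilinear equivariant DFA once $\sim$ has been fixed: one must expose $Q/{\sim}$ as a polynomial set (a disjoint union of monomials $\Z^k$ and $\Z_k$) and express the induced $\delta$ and $F$ as semilinear relations on this concrete representation, because the hypothesized minimality decider consumes input in this form. Selecting canonical representatives of the $\sim$-classes in a Presburger-definable way is feasible, but requires treating monomials of $Q$ of the form $\Z^k$ (in which arbitrary semilinear subsets appear) uniformly with those of the form $\Z_k$ (where every subset is trivially semilinear).
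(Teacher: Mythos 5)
Your proof is correct in substance but takes a genuinely different route from the paper's. The paper keeps the minimality decider in a loop: it repeatedly asks whether the current automaton is minimal and, if not, searches (by brute-force enumeration of target polynomial sets $P$ and equivariant maps $f:Q\to P$, which are finitely presentable and automatically semilinear by Corollary~\ref{cor:equivariant-function-between-single-orbits}) for a surjective non-bijective automaton homomorphism, then recurses on the image; termination comes from well-foundedness of the order ``there is a surjective non-bijective equivariant map $X\to Y$'', since each collapse strictly decreases the number of orbits or the characteristic of some orbit. You instead enumerate semilinear equivariant congruences on the \emph{original} state space, quotient once, and let the decider recognize when the quotient is minimal; your termination argument is that the Myhill--Nerode congruence itself is semilinear (as the $(f,f)$-preimage of the diagonal, using the preceding theorem), so it eventually appears in the enumeration. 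Each approach buys something: the paper never has to realize an abstract congruence as a concrete quotient DFA --- the candidate $P$ is enumerated already in canonical form $\biguplus \Z_{k}$ together with the map onto it --- and it does not need semilinearity of $\sim_{\mathrm{MN}}$ as a relation; you avoid the iteration and the well-foundedness argument, but must carry out the quotient construction you flag as the ``main obstacle''. That step is genuinely needed and, while doable (for each orbit $\tau\cong\Z$ the set $\{d: x\sim x+d\}$ is a subgroup $c\Z$ with $c$ computable from the Presburger presentation, giving the characteristic of the quotient orbit, and cross-orbit offsets are likewise computable), it is essentially the content of Section~\ref{sec:the-signature-of-an-equivariant-equivalence-relation}, so your proof front-loads machinery the paper defers. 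One shared caveat: both your argument and the paper's implicitly ignore unreachable states (a quotient or homomorphic image of an automaton with unreachable states need not be minimal even when all mergeable states are merged); both proofs need a preliminary restriction to the reachable part, which is effective by the fixpoint computation in Theorem~\ref{thm:emptiness-decidable-for-Z-DFA}. Your forward direction (compare orbit multisets with those of the computed minimal automaton) is fine and is the direction the paper treats as immediate.
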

\begin{proof}
	Suppose that one can decide if an automaton is minimal. We show how to compute the minimal automaton. The algorithm runs two nested loops. 
	
\begin{itemize}
	\item In the outer loop, we input a semilinear automaton $\Aa$ with states.
	First, we test if $\Aa$ is already minimal. If yes, the algorithm terminates and outputs $\Aa$. Otherwise, the algorithm enters the inner loop.
	\item In the inner loop, the algorithm  searches through all equivariant functions  $f: Q \to P$, and tests  each one to see  if it is an automaton homomorphism, which can be expressed in Presburger arithmetic. Under the assumption that $\Aa$ is not minimal, the  inner loop finds some homomorphism $f$ which is not a bijection. Upon finding such a homomorphism,  the inner loop terminates, and the outer loop is executed again.
\end{itemize} 
We claim that the outer loop can only be executed finitely many times. This is because the order 
\begin{align*}
	X > Y  \qquad \mbox{if there is a surjective, non bijective, equivariant function $f : X \to Y$}
\end{align*}
is well founded. The reason is that a non-bijective equivariant  function must either decrease the number of orbits, or decrease the characteristic of some orbit. By a characteristic of $\Int_k$ we mean here the number $k$. Observe that the order does admit arbitrarily long decreasing chains with the same starting point, e.g.
\begin{align*}
	\Int > \Int_{2^k} > \Int_{2^{k-1}} > \cdots > \Int_2 > \Int_1 \qquad \mbox{ for any }k \in \Nat.
\end{align*}
\qed
\end{proof}

\subsection{Partition refinement fails} 
One natural approach to minimization problem would be to use a partition refinement algorithm, described below.

 Consider semilinear equivariant DFA $\Aa$, with states $Q$. For $n \in \Nat$, define an equivalence relation $\sim_n$ on $Q$, which identifies two states if they accept the same inputs of length at most $n$. It is not difficult to see that  $\sim_n$ is a semilinear relation, because its definition can be expressed in Presburger arithmetic. Also,  $\sim_n$ is an equivariant equivalence  relation on $Q$, which means that there is an orbit-finite quotient $Q/_{\sim_n}$. Suppose that  these equivalences stabilize at some $n \in \Nat$, which means that the equivalence relations $\sim_n$ and $\sim_{n+1}$ are the same. Then it is not difficult to prove that $\sim_{n}$ is the Myhill-Nerode equivalence on states of the automaton, and   the minimal automaton has states $Q/_{\sim_n}$.

Unfortunately,  the equivalence might never stabilize:
\begin{proposition}\label{prop:partition-refinement-fails}
	There is a semilinear equivariant DFA, such that equivalences $\sim_n$ never stabilize.
\end{proposition}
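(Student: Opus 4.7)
The plan is to exhibit a concrete semilinear equivariant DFA whose equivalences $\sim_n$ strictly refine at every step. I would aim for the restriction of $\sim_n$ to the orbit $\Int$ of states to coincide with the congruence modulo $2^n$, since this tower of congruences never stabilizes.

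Take states $Q = \Int \cup \{q_0, \top\}$ and input alphabet $A = \Int$, with initial state $q_0$, accepting set $\{\top\}$, and transition function
\begin{align*}
\delta(q_0, a) &= a, \qquad \delta(\top, a) = \top, \\
\delta(i, a) &= \begin{cases} (i + a)/2 & \text{if } i \equiv a \pmod{2}, \\ \top & \text{if } i \not\equiv a \pmod{2}, \end{cases}
\end{align*}
for $i, a \in \Int$. Each clause depends only on the offset $i - a$, so $\delta$ is equivariant; the parity condition and the affine assignment $(i+a)/2$ are both Presburger-definable, so $\delta$ is semilinear.

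The main verification is an induction on $n$ showing that for $i, j \in \Int$ one has $i \sim_n j$ iff $i \equiv j \pmod{2^n}$. The base case $n = 0$ is trivial. For the inductive step, if $i \not\equiv j \pmod{2}$, then any letter $a$ of parity opposite to $i$ sends $i$ to $\top$ and $j$ to a state in $\Int$, distinguishing them already at depth one. If instead $i \equiv j \pmod{2}$, then letters $a$ of opposite parity send both to $\top$, while letters $a$ of the same parity send them to $(i+a)/2$ and $(j+a)/2$, and by the induction hypothesis these are $\sim_n$-equivalent iff $(i+a)/2 \equiv (j+a)/2 \pmod{2^n}$, iff $i \equiv j \pmod{2^{n+1}}$. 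Consequently the partition $\sim_n$ on $Q$ strictly refines at each step and never stabilizes.

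I do not anticipate a substantial obstacle: the only delicacy is that both the halving branch and the absorb-into-$\top$ branch must coexist inside a single equivariant semilinear function, which is exactly what motivates the conditional on the parity of $i - a$; the concrete induction then proceeds mechanically.
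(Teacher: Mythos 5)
Your proof is correct and takes essentially the same approach as the paper: both exhibit a concrete automaton whose transition halves the offset between state and letter, so that $\sim_n$ restricted to the integer orbit becomes congruence modulo $2^n$ and the refinement never stabilizes. Your variant (using the parity of $i-a$ to encode the bit, rather than the paper's explicit alphabet $\{start,\underline 0,\underline 1\}\times\Int$) is a slightly leaner packaging of the same mechanism, and your induction and the equivariance/semilinearity checks go through as you describe.
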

Proposition~\ref{prop:partition-refinement-fails} is shown using an automaton such that for every $n \in \Nat$, the set  $Q/_{\sim_n}$ has three orbits, with characteristics $1,1$ and $2^n$ respectively.
\begin{proof}
		Consider an input alphabet
		\begin{align*}
			A = \set{start,\underline 0, \underline 1} \times \Int.
		\end{align*}
		Consider an automaton defined as follows. Its state space is:
		\begin{align*}
			Q = \Int \uplus \set{\epsilon,\bot}.
		\end{align*}
	The initial state is $\epsilon$.	The orbit $\Int$ is accepting, while the orbits $\set \epsilon$ and $\set \bot$ are rejecting. The transition function is defined below, for $i,j \in \Int$. When reading the definition below it is a good idea to look at the case $j=0$.
		\begin{eqnarray*}
			\delta(\epsilon,(\sigma,i))  &=& \begin{cases}
				i &\mbox{when $\sigma = start$}\\
				\bot &\mbox{otherwise}\\
	\end{cases}\\
					\delta(i,(\sigma,j))  &=& \begin{cases}
						\frac{i-j-\sigma}2 + j &\mbox{when $\underline \sigma \in \set{0,1}$ and  $i-j-\sigma$ is even}\\
						\bot &\mbox{otherwise}\\
			\end{cases}\\
					\delta(\bot,(\sigma,j))  &=& \bot
		\end{eqnarray*}
		Consider inputs to the automaton which are of the form
		\begin{align*}
			start(i) \cdot \sigma_1(0) \cdots \sigma_n(0) \qquad \mbox{ for $i \in \Nat$ and  }\sigma_1,\ldots,\sigma_n \in \set{\underline 0,\underline 1}.
			\end{align*}
			It is not difficult to see that such an input is accepted if and only if $\sigma_1 \cdots \sigma_n$ is a prefix of the binary representation of $i$ (the binary representation written with the least significant bit coming first).

		Consider the equivalence relation $\sim_n$ on $Q$ defined by
		\begin{align*}
			q \sim_n p \qquad\mbox{if \ } q,p \in \Int \text{ \ and \ } q = p \mod 2^n.
		\end{align*}
		It is not difficult to see that for every $a \in A$, we have
		\begin{align*}
			q \sim_n p  \qquad\mbox{implies} \qquad \delta(q,a) \sim_{n-1} \delta(p,a).
		\end{align*}
		It follows that states equivalent under $\sim_n$ have the same futures of length at most $n$.  Also, the converse holds: if states are not equivalent under $\sim_n$, then they have different futures of length at most $n$. It follows that the equivalences $\sim_n$ have more and more equivalence classes, as $n$ grows, and never stabilize.
\qed
\end{proof}

\subsection{Computing the minimal automaton}
So, how does one effectively minimize an automaton?  Our approach is to reduce the minimization problem to satisfiability for an extension of Presburger arithmetic, which allows a limited use of the  divisibility predicate. 

A formula of  \emph{existential Presburger arithmetic with divisibility} (EPAD) is a formula of the form
\begin{equation}
	\label{eq:lipszyc}
	\exists x_1 \cdots \exists x_n \  \phi
\end{equation}
where the variables $x_1,\ldots,x_n$ quantify over integers; and
$\phi$ is a quantifier-free formula in the language $(\Int, 0, 1, +, |)$, 
where $|$ stands for divisibility. 
%
%
A typical instance of EPAD is deciding if there is a solution to the system  
\begin{align*}
	& 3x  = 3 \mod y \\
	& 5y = 7 \mod x \\
        & 2x = y - 18
\end{align*}

\begin{theorem}\label{thm:epad}\cite{lipshitz}
	Satisfiability is decidable for EPAD.
\end{theorem}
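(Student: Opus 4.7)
The plan is to establish an effective small-model property for EPAD: whenever a formula $\phi$ of the form \eqref{eq:lipszyc} is satisfiable, it has a witness whose coordinates are bounded in absolute value by some computable function of the bit-length of $\phi$. Once such a bound is in place, decidability follows by exhaustively enumerating candidate witnesses up to the bound and evaluating $\phi$ on each one.

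The preparatory steps are routine. First I would convert the quantifier-free matrix of $\phi$ to disjunctive normal form and handle each disjunct independently. Each disjunct is a conjunction of linear (in)equalities over $\Z$ together with atoms of the form $s(\bar x) \mid t(\bar x)$ or $\neg\,(s(\bar x) \mid t(\bar x))$, where $s$ and $t$ are linear polynomials with integer coefficients. A negated divisibility can be re-expressed, after case splitting on the sign of $s$, using a fresh existentially quantified remainder variable, so we may assume all divisibility atoms are positive.

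Next I would eliminate the divisibility predicate itself. Each atom $s_j(\bar x) \mid t_j(\bar x)$ is replaced by an equation $t_j(\bar x) = q_j \cdot s_j(\bar x)$ with a fresh existentially quantified quotient variable $q_j$. The problem is thereby reduced to the solvability over $\Z$ of a system of linear (in)equalities together with finitely many bilinear equations, each of which couples one quotient variable to a linear polynomial in the original variables.

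The main obstacle, and the genuine mathematical content of the theorem, is to bound the minimal solution of such a mixed linear/bilinear system. Standard Presburger techniques yield at most exponential bounds for purely linear systems, but divisibility atoms can in principle force enormous solutions, since the Chinese Remainder Theorem lets one assemble many independent prime-power conditions into a single composite constraint. The strategy, following Lipshitz, is to analyze the problem prime by prime: identify the finite set of primes relevant to the coefficients appearing in $\phi$, enumerate the possible patterns of $p$-adic valuations of the terms $s_j(\bar x)$ and $t_j(\bar x)$ for each such prime, and show that only a bounded and effectively computable number of such patterns need be considered. Combining these local analyses with the remaining linear constraints via a CRT-style argument yields the effective bound on a witness, after which decidability follows by a finite search.
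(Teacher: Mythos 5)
This theorem is not proved in the paper at all: it is an external result imported from the cited work of Lipshitz, so there is no in-paper argument to compare yours against. Judged on its own terms, your proposal is a plausible high-level reconstruction of the known proof strategy, but it contains a genuine gap at exactly the point where the theorem lives.

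The preparatory reductions (DNF, removing negated divisibilities with remainder variables, replacing $s\mid t$ by $t = q\cdot s$ with a fresh quotient $q$) are fine, but note that the last step by itself buys nothing: solvability of general systems mixing linear constraints with bilinear equations over $\Z$ is undecidable (one can encode multiplication), so decidability must come from the special structure in which each quotient variable occurs linearly in a single equation and nowhere else. More importantly, the entire content of the theorem is the assertion you leave unproved: that minimal solutions of a system of divisibility conditions on linear terms admit an effectively computable bound, or at least that satisfiability can be decided by a finite local analysis. Your sketch of that step --- ``identify the relevant primes, enumerate valuation patterns, combine by CRT'' --- glosses over the real difficulties: the primes dividing the values $s_j(\bar x)$ at a solution are not confined to those dividing the coefficients of $\phi$, so one must argue separately that for all but an effectively determined finite set of primes the local conditions impose no constraint; one must then show that the finitely many local solvability conditions, together with the archimedean (ordering) constraints, can be simultaneously met; and the interaction between the $p$-adic conditions and the linear inequalities is where Lipshitz's argument does its real work. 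As written, your proposal restates the theorem (``only a bounded and effectively computable number of patterns need be considered'') rather than proving it. The outline points in the right direction, but it is a roadmap, not a proof; to be acceptable it would need the key quantitative lemma stated precisely and established.
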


By a lengthy reduction to Theorem~\ref{thm:epad}, we prove that one can effectively minimize a semilinear equivariant DFA. 
This is the main result of this paper. 

\begin{theorem}\label{thm:decide-semilinear-automaton-minimal}
	Given a semilinear equivariant DFA with the integer atoms, one can compute the minimal automaton.
\end{theorem}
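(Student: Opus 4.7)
By Proposition~\ref{prop:minim-equiv} it suffices to decide whether a given semilinear equivariant DFA $\Aa = (A, Q, q_0, F, \delta)$ is minimal. First compute the semilinear set $R \subseteq Q$ of reachable states via Theorem~\ref{thm:emptiness-decidable-for-Z-DFA} and restrict attention to $R$. Then $\Aa$ is non-minimal iff $R$ admits an equivariant \emph{bisimulation} --- an equivariant equivalence relation $\sim$ that refines $F$ vs.\ $R \setminus F$, is preserved by $\delta$, and identifies two distinct states. Since the minimizing function is semilinear by the preceding theorem, we may restrict attention to semilinear $\sim$, so the coarsest non-trivial bisimulation (if it exists) is semilinear too. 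The plan is to express the existence of some non-trivial semilinear $\sim$ as an EPAD sentence and invoke Theorem~\ref{thm:epad}.

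A semilinear equivariant equivalence $\sim$ on $Q$ is described by a finite tuple $\vec{k}$ of integer unknowns: on each orbit $\Int$ of $Q$, $\sim$ is a congruence modulo some $k \in \Nat$; on each orbit $\Int_m$, a congruence modulo some divisor of $m$; between any two orbits, a semilinear union of product-orbits specified by finitely many ``link shifts''. Using equivariance of $\delta$, the bisimulation condition $q_1 \sim q_2 \Rightarrow \delta(q_1,a) \sim \delta(q_2,a)$ reduces to checking, on one representative pair per orbit of $\sim$ and for each alphabet atom $a$, a constraint of the form ``$k_{O'} \mid c(\vec{k})$'', where $c$ is a linear combination of unknowns and of constants extracted from one of the Presburger cells of the semilinear description of $\delta$. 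The universal over $a$ collapses because both sides of the putative divisibility are linear in $a$, so matching coefficients suffices; the case analysis over cells is finite. Non-triviality (some $k_O \neq 0$ or some linking non-empty, with the identified pair in $R$) is an analogous existential Presburger-plus-divisibility condition. Assembled, ``$\Aa$ is non-minimal'' is an existential sentence in $(\Int, 0, 1, +, |)$, i.e., EPAD, whose satisfiability is decided by Theorem~\ref{thm:epad}; Proposition~\ref{prop:minim-equiv} then upgrades the decision procedure into an actual minimization algorithm.

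The hard part is the bookkeeping of this reduction. Three issues deserve care. First, the modular guards inside the cells of $\delta$ interact with the unknown moduli $k_O$ to produce genuine divisibility literals, not Presburger ones; the moduli $2^n$ in the example of Proposition~\ref{prop:partition-refinement-fails} show why no fixed Presburger formula can substitute, forcing the use of Lipshitz's divisibility predicate. Second, transitivity of $\sim$ couples moduli with cross-orbit linking shifts, and a naive formulation would produce products of two unknowns, leaving EPAD for a presumably undecidable theory; restricting to bisimulations whose cross-orbit part is generated by a single representative per pair of orbits keeps the formula inside the existential fragment. Third, combining the reachability constraint with the bisimulation constraints must not introduce a second quantifier alternation. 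Managing these three layers while staying inside EPAD is what makes the reduction ``subtle and complicated'', and is the principal obstacle of the proof.
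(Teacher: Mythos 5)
Your proposal follows essentially the same route as the paper: reduce to deciding minimality via Proposition~\ref{prop:minim-equiv}, parametrize equivariant equivalences by per-orbit moduli and cross-orbit shifts (the paper's ``signatures''), turn the congruence condition into an existential formula with divisibility by exploiting that $\delta$ is affine on finitely many Presburger cells and eliminating the residual integer universal by linearity, and invoke Lipshitz's theorem. The only differences are cosmetic --- you fix a representative state pair and quantify over letters where the paper fixes a representative letter per alphabet orbit and quantifies over the state coordinate (eliminated via its arithmetic-progression lemma), and you add an explicit reachability pass --- so the argument matches the paper's.
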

The remaining part of the paper is devoted to the proof of the theorem.


\newcommand{\charm}{\mathrm{char}}

\section{Proof of Theorem~\ref{thm:decide-semilinear-automaton-minimal}}
\label{sec:arithmetic}

After some preparatory lemmas (Lemma~\ref{lem:dimension-reduction} and Lemma~\ref{lem:congr} below)
we formulate Theorem~\ref{thm:congruence-decide} below that essentially says that minimality is decidable
for semilinear equivariant DFAs. By the virtue of Proposition~\ref{prop:minim-equiv}
we know that this is sufficient for proving Theorem~\ref{thm:decide-semilinear-automaton-minimal}.

\subsection{Decidability of minimality}
The following lemma shows that semilinear and equivariant subsets of monomial sets can be  interpreted as semilinear, but not necessarily equivariant, subsets of monomial sets of lower dimensions.

\begin{lemma}\label{lem:dimension-reduction}
	The function
	\begin{align*}
		X \subseteq \Int^k \qquad \mapsto \qquad \set{(i_2,\ldots,i_k) : (0,i_2,\ldots,i_k) \in X}\subseteq \Int^{k-1}
	\end{align*}
	is a bijection between semilinear and equivariant subsets of $\Int^k$, and  semilinear but not necessarily equivariant subsets of $\Int^{k-1}$.
\end{lemma}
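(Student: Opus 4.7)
The approach is to exhibit an explicit two-sided inverse of the slice map and check that both maps preserve semilinearity. The key observation is that equivariance of $X \subseteq \Int^k$ under the diagonal translation action $(i_1,\ldots,i_k) \mapsto (i_1+z,\ldots,i_k+z)$ yields, after translating by $-i_1$, the equivalence
\[
(i_1,\ldots,i_k) \in X \ \iff\ (0, i_2-i_1,\ldots,i_k-i_1) \in X,
\]
so $X$ is completely determined by its slice at $i_1 = 0$. This suggests defining the candidate inverse map by
\[
Y \subseteq \Int^{k-1} \ \mapsto\ \widetilde X_Y := \{(i_1,\ldots,i_k) \in \Int^k : (i_2 - i_1,\ldots, i_k - i_1) \in Y\},
\]
which is equivariant by construction, since the defining condition only depends on the differences $i_j - i_1$.

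First I would verify semilinearity preservation in both directions. The forward map is the composition of intersecting $X$ with the coordinate hyperplane $\{0\} \times \Int^{k-1}$ and projecting onto the last $k-1$ coordinates, both of which preserve semilinearity by standard Presburger arithmetic. The reverse map is a substitution: if $Y$ is defined by a Presburger formula $\varphi(y_1,\ldots,y_{k-1})$, then $\widetilde X_Y$ is defined by $\varphi(i_2 - i_1,\ldots, i_k - i_1)$, which is again Presburger, hence $\widetilde X_Y$ is semilinear.

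Then I would check that the two maps are mutual inverses. Starting from a semilinear equivariant $X$, applying the reverse map to its slice produces $\{(i_1,\ldots,i_k) : (0, i_2 - i_1, \ldots, i_k - i_1) \in X\}$, which equals $X$ by equivariance. Conversely, the slice of $\widetilde X_Y$ is $\{(i_2,\ldots,i_k) : (i_2 - 0,\ldots, i_k - 0) \in Y\} = Y$. Since this is a preparatory lemma whose whole content is that equivariance lets us trade one dimension for the condition of equivariance via the affine change of coordinates $(i_1,\ldots,i_k) \leftrightarrow (i_1, i_2 - i_1,\ldots, i_k - i_1)$, I do not expect any real obstacle; the proof is pure bookkeeping on top of closure of semilinear sets under Presburger-definable operations.
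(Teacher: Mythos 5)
Your proof is correct and follows essentially the same route as the paper: the inverse map you define, $\widetilde X_Y = \{(i_1,\ldots,i_k) : (i_2-i_1,\ldots,i_k-i_1) \in Y\}$, is literally the same set as the paper's $\{(l, i_1+l,\ldots,i_{k-1}+l) : (i_1,\ldots,i_{k-1}) \in Y\}$, just written as a preimage under the difference map rather than as a parametrized image. The paper leaves the semilinearity checks and the verification of mutual inverses implicit, so your write-up is simply a more detailed version of the same argument.
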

\begin{proof}
	It is easy to see that the function produces semilinear sets, because the result of the function can be defined in Presburger arithmetic. The inverse of the function is defined by
	\begin{align*}
		Y\subseteq \Int^{k-1} \qquad \mapsto  \qquad 
		\{ (l, i_1 + l, \ldots, i_{k-1} + l) : (i_1, \ldots, i_{k-1}) \in Y \}
	\end{align*}
\qed
\end{proof}

Consider a semilinear equivariant DFA $\Aa$ with states $Q$, input alphabet $A$, and transition function $\delta$.
In the input alphabet, choose letters 
\begin{align*}
	a_1,\ldots,a_n \in A
\end{align*}
so that every orbit  is represented.  For each of these letters, consider the function
\begin{align*}
	\delta_i : Q \to Q \qquad  \delta_i(q)= \delta(q,a_i).
\end{align*}

\begin{lemma} \label{lem:congr}
	An equivariant equivalence relation $\equiv$ on $Q$ is a congruence in the automaton $\Aa$ if and only if it respects the final states and is a congruence for the functions $\delta_1,\ldots,\delta_n$.
\end{lemma}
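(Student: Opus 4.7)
The nontrivial direction is the ``if'' one, since the ``only if'' direction is immediate: if $\equiv$ is a congruence for the full transition function $\delta$, then in particular it is a congruence for each restriction $\delta_i(q) = \delta(q, a_i)$, and of course it respects final states. So I would open the proof by dispatching this direction in one line.

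For the ``if'' direction, the key idea is that because the letters $a_1, \ldots, a_n$ cover every orbit of $A$, any letter $a \in A$ can be written as $a = a_i \cdot \pi$ for some representative $a_i$ and some automorphism $\pi$ of atoms. The plan is then to exploit equivariance of $\delta$ to rewrite an arbitrary transition in terms of a transition by a representative letter. Concretely, equivariance of $\delta$ gives, for any state $q$,
\begin{align*}
\delta(q, a) \ = \ \delta(q, a_i \cdot \pi) \ = \ \delta(q \cdot \pi^{-1}, a_i) \cdot \pi \ = \ \delta_i(q \cdot \pi^{-1}) \cdot \pi.
\end{align*}

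Now suppose $q \equiv q'$ and fix a letter $a = a_i \cdot \pi$ as above. Equivariance of $\equiv$ yields $q \cdot \pi^{-1} \equiv q' \cdot \pi^{-1}$. The assumption that $\equiv$ is a congruence for $\delta_i$ then gives $\delta_i(q \cdot \pi^{-1}) \equiv \delta_i(q' \cdot \pi^{-1})$, and applying equivariance of $\equiv$ once more (now to $\pi$) produces $\delta_i(q \cdot \pi^{-1}) \cdot \pi \equiv \delta_i(q' \cdot \pi^{-1}) \cdot \pi$. By the displayed equation this is exactly $\delta(q, a) \equiv \delta(q', a)$, so $\equiv$ is a congruence for $\delta$. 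Together with the assumption that $\equiv$ respects final states, this shows $\equiv$ is a congruence in $\Aa$.

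The only subtle point, which I would emphasize briefly in the write-up, is the double use of equivariance of $\equiv$: once to translate the hypothesis $q \equiv q'$ along $\pi^{-1}$, and once to translate back along $\pi$ after applying $\delta_i$. Everything else is a one-line computation using equivariance of $\delta$ and the fact that the $a_i$ exhaust the orbits of $A$.
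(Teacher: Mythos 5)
Your proof is correct and follows essentially the same route as the paper's: write an arbitrary letter as $a_i$ translated by an automorphism, use equivariance of $\delta$ to reduce to $\delta_i$, and apply equivariance of $\equiv$ twice (once to translate the hypothesis, once to translate back). The only difference is a harmless change of convention in the direction of $\pi$.
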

\begin{proof}
	Choose any states $p,q \in Q$ such that $p \equiv q$. We need to show that 
	\begin{align*}
		\delta(p,a) \equiv \delta(q,a) \qquad \mbox{for every }a \in A.
	\end{align*}
	Choose then some letter $a \in A$. By choice of $a_1,\ldots,a_n$, there must be some $a_i$ and some permutation $\pi$ such that $a \cdot \pi = a_i$. By equivariance of $\delta$, we have 
	\begin{align*}
		\delta(q,a) 
		\cdot \pi  = \delta(q \cdot \pi, a \cdot \pi)  = \delta_i(q \cdot \pi)
	\end{align*}
	Because $\equiv$ is equivariant, it follows that $p \cdot \pi \equiv q \cdot \pi$, and therefore by the  assumption on $\equiv$, we have
	\begin{align*}
		\delta_i(q \cdot \pi) \equiv \delta_i(p \cdot \pi).
	\end{align*}
	By the same reasoning as above, we have 
	\begin{align*}
		\delta_i(p \cdot \pi ) \equiv \delta(p,a) \cdot \pi.
	\end{align*}
	We have just proved that 
	\begin{align*}
		\delta(q,a) \cdot \pi \equiv \delta(p,a) \cdot \pi.
	\end{align*} By equivariance of $\equiv$, it follows that $\delta(q,a) \equiv \delta(p,a)$.
\qed
\end{proof}

Observe that the functions $\delta_1,\ldots,\delta_n$ are not necessarily equivariant. Actually, they can be completely aribtrary, because every choice of functions $\delta_1,\ldots,\delta_n$ can be lifted to an equivariant transition function 
\begin{align*}
	\delta: Q \times A \to Q
\qquad\mbox{defined by}\qquad
	\delta(q,a) = \delta_i(q) \cdot \pi
\end{align*}
where $a_i$ and  $\pi \in \Int$ are the unique elements satisfying $a = a_i \cdot \pi$. Because we are dealing with a semilinear automaton, all that we know is that $\delta_1,\ldots,\delta_n$  are semilinear.
That is why, in order to prove Theorem~\ref{thm:decide-semilinear-automaton-minimal}, we need to decide if there is an equivalence relation on $Q$ which respects the final states, and which is a nontrivial congruence with respect to the arbitrary semilinear unary operations $\delta_1,\ldots,\delta_n$. This is shown in Theorem~\ref{thm:congruence-decide}, which is the main result of Section~\ref{sec:arithmetic}.

\begin{theorem}\label{thm:congruence-decide}
	The following problem is decidable:
	\begin{itemize}
		\item {\bf Input.} An orbit finite set $Q$, an equivariant subset $F \subseteq Q$,  semilinear but not necessarily equivariant functions $\delta_1,\ldots,\delta_n : Q \to Q$.
		\item {\bf Question.} Is there an equivalence relation $\equiv$ on $Q$ which:
		\begin{enumerate}
			\item is nontrivial, i.e.~identifies at least two different elements;
			\item respects $F$;
			\item is a congruence with respect to $\delta_1,\ldots,\delta_n$;
			\item is equivariant.
		\end{enumerate}
	\end{itemize}
	\end{theorem}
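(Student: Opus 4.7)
The plan is to reduce the problem to satisfiability of EPAD and invoke Theorem~\ref{thm:epad}. The first step is to parameterize the equivariant equivalence relations on $Q$ by a bounded amount of integer data. Using Lemma~\ref{lem:single-orbit-sets}, write $Q$ as a disjoint union $\Int_{k_1} \sqcup \cdots \sqcup \Int_{k_m}$ of orbits. Viewing an equivariant equivalence as the kernel of an equivariant surjection $\pi : Q \to Q/{\equiv}$, and expanding $Q/{\equiv}$ again via Lemma~\ref{lem:single-orbit-sets}, one sees that $\equiv$ is completely determined by: a partition $\mathcal{P}$ of $\{1,\ldots,m\}$ specifying which orbits get merged; for each class $C \in \mathcal{P}$, a ``characteristic'' $d_C \geq 0$ with $d_C \mid k_j$ for every $j \in C$; and for each $j \in C$ an offset $b_j \in \Int$. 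The rule is that $x \in \Int_{k_i}$ and $y \in \Int_{k_j}$ satisfy $x \equiv y$ iff $i,j$ lie in the same class $C$ and $d_C \mid (x - y + b_j - b_i)$. Since $F$ is equivariant it is a union of orbits, and ``respects $F$'' just says that every class of $\mathcal{P}$ lies entirely inside $F$ or entirely outside---a combinatorial constraint on $\mathcal{P}$.

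Because there are only finitely many partitions $\mathcal{P}$, I iterate over those compatible with $F$ and, for each, construct an EPAD formula $\Psi_{\mathcal{P}}$ in the free integer variables $(d_C)_{C \in \mathcal{P}}$ and $(b_j)_{j \leq m}$, whose satisfiability is equivalent to the existence of a nontrivial $\equiv$ of the prescribed shape that is a congruence for every $\delta_s$. The conditions $d_C \mid k_j$ are immediate; nontriviality is either combinatorially guaranteed (some class has size $\geq 2$) or is expressed as $d_C \neq k_j$ for some $j \in C$. The subtle conjunct is the congruence condition: for each $\delta_s$, the universally quantified statement ``for all $q \equiv q'$, $\delta_s(q) \equiv \delta_s(q')$''---which is what threatens to put us outside EPAD.

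The main obstacle, and the technical heart of the reduction, will be eliminating the universal quantifier over $(q,q')$ while staying within existential EPAD. I plan to exploit the semilinear presentation of $\delta_s$: a finite partition of $Q$ into semilinear pieces such that on each piece $\delta_s$ is an affine map into a fixed target orbit. For each pair of pieces (one for $q$, one for $q'$) we obtain a fixed ``combinatorial type'' $(i, j, \alpha, \beta)$---the source orbits together with the target orbits containing $\delta_s(q)$ and $\delta_s(q')$---and the congruence requirement for that type boils down to an arithmetic statement of the shape: for every $(u,v)$ in some semilinear set $S \subseteq \Int^2$, if $d_C \mid u$ then $d_{C'} \mid v$. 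The key technical lemma is then that, for any such $S$, this property can be rewritten as an EPAD formula in $d_C, d_{C'}$, the offsets, and the numeric data describing $S$. I would prove the lemma by decomposing $S$ into linear sets $v_k + \mathbb{N} V_k$; on each piece, understanding the sublattice of nonnegative integer coefficients on which $d_C$ divides the first coordinate (via $\gcd$'s of entries of the period matrix with $d_C$) and expressing the required second-coordinate divisibility on that sublattice turns the universal over $\mathbb{N}^r$ into a finite Boolean combination of divisibility atoms---all of which are in EPAD. Once this is done, $\Psi_{\mathcal{P}}$ is a genuine EPAD formula, and the overall algorithm simply loops over all $\mathcal{P}$ and applies Theorem~\ref{thm:epad}, answering ``yes'' iff some $\Psi_{\mathcal{P}}$ is satisfiable.
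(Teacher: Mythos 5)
Your overall architecture matches the paper's: you parameterize equivariant equivalence relations by a finite combinatorial type (your partition $\mathcal P$, the paper's ``equivalence type'' $[\equiv]$) together with integer parameters (your $d_C$ and offsets $b_j$, the paper's $\charm$ and $\diff$ with $\diff(\tau_i,\tau_j)=b_j-b_i$), you iterate over the finitely many types, handle $F$ and nontriviality exactly as the paper does, and reduce the remaining question to EPAD satisfiability via Theorem~\ref{thm:epad}. Where you genuinely diverge is the quantifier-elimination step, which is where all the difficulty of the theorem sits. The paper first observes (Lemma~\ref{lem:congruence-char}) that $\equiv_\Phi$ is \emph{generated}, as an equivalence relation, by the pairs $(\tau,i)\equiv(\tau,i+\charm(\tau))$ and $(\tau,i)\equiv(\sigma,i+\diff(\tau,\sigma))$, so a function is a congruence iff it respects these generators. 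This collapses the universal quantification from pairs $(q,q')$ to a single integer $i$, after which one only needs: for affine functions on a one-dimensional arithmetic progression, a congruence $\alpha(x)=0 \bmod \charm(\tau)$ holds everywhere iff it holds at two consecutive points (Lemma~\ref{lem:quantifier-elimination}). You instead attack the two-dimensional universal over all pairs $q\equiv q'$ directly, reducing it to ``for all $(u,v)$ in a parametric semilinear $S\subseteq\Int^2$, $d_C\mid u\Rightarrow d_{C'}\mid v$'' and proposing a lattice/gcd analysis of the linear pieces of $S$.

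Your route can be made to work, but the key lemma you defer is substantially harder than anything in the paper and your sketch leaves the delicate points open. Concretely: (i) the sublattice $L_{d_C}=\{n\in\Z^r : d_C\mid \sum_l n_l w_{l,1}\}$ has a generating set that depends on the \emph{variable} $d_C$; you need to observe that it is generated by constant vectors (a basis of $\ker(n\mapsto\sum n_l w_{l,1})$ over $\Z$) together with a single vector scaled by $d_C/\gcd(d_C,g)$ for a constant $g$, and that the gcd of a variable with a constant takes finitely many values, so a case split restores linearity; (ii) you must justify that checking the target divisibility on a base point plus lattice generators suffices, i.e.\ that the points of $\mathbb N^r$ in the relevant coset of $L_{d_C}$ generate the whole lattice as a group (true because $L_{d_C}$ has finite index, but it needs saying); (iii) the hypothesis ``some $(u,v)\in S$ has $d_C\mid u$'' is an existential whose \emph{negation} appears in your implication, which threatens to leave the existential fragment; it is rescued only because that existential collapses to a single divisibility literal $\gcd(w_{1,1},\ldots,w_{r,1},d_C)\mid\beta$, again after a finite gcd case split; and (iv) when the two target orbits $\alpha,\beta$ are not $\sim$-related, the ``congruence requirement for that type'' is not a divisibility implication but an emptiness condition, which must be expressed separately. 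None of these is fatal, but together they constitute the actual proof, and the paper's reduce-to-generators trick is precisely the simplification that lets one avoid all of them: after it, the only universal left is over a one-dimensional progression and is eliminated by evaluating an affine function at two consecutive points.
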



The proof strategy for the theorem is as follows. 
\begin{enumerate}
	\item  We  show that every equivariant equivalence relation $\equiv$ on $Q$ can be  described by a \emph{signature}, which consists of:
	\begin{enumerate}
		\item The \emph{equivalence type}: an equivalence relation $\sim$ on the orbits of $Q$. 
		\item The \emph{equivalence parameters}: a finite vector of integer parameters.
	\end{enumerate}
	This step is done in Section~\ref{sec:the-signature-of-an-equivariant-equivalence-relation}.
\item We show that for every choice of the equivalence type (there are finitely many choices), whether or not the equivalence relation  $\equiv$ is a congruence with respect to $\delta_1,\ldots,\delta_n$ can be described, in terms of the equivalence  parameters,  by an existential formula of Presburger arithmetic with divisibility predicates.
This step is done in Section~\ref{sec:which-signatues-describe-congruences}.
\item We recall Theorem~\ref{thm:epad}, which says that satisfiability is decidable for existential formulas of Presburger arithmetic with divisibility predicates.	
\end{enumerate}
The remaining part of Section~\ref{sec:arithmetic}
is devoted to the proof of Theorem~\ref{thm:congruence-decide}.

\subsection{The signature of an equivariant equivalence relation}
\label{sec:the-signature-of-an-equivariant-equivalence-relation}
In this section, we present the first step of the proof of Theorem~\ref{thm:congruence-decide}. We show how every equivariant equivalence relation on an orbit finite set can be described by a finite piece of information and a vector of numbers.

Consider an orbit-finite equivariant set $Q$, together with an equivariant equivalence relation $\equiv$.
 Define a  relation $[\equiv]$ on orbits of $Q$ by:
\begin{align*}
	\tau_1\ [\equiv]\ \tau_2 \qquad\mbox{if $q_1 \equiv q_2$ for some $q_1 \in \tau_1$ and some $q_2 \in \tau_2$}.
\end{align*}
This relation $[\equiv]$ is called the \emph{equivalence type} of $\equiv$.

\begin{lemma}
	The relation $[\equiv]$ is an equivalence relation.
\end{lemma}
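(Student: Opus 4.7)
The plan is to verify reflexivity, symmetry, and transitivity directly from the corresponding properties of $\equiv$ on $Q$, with equivariance entering only to handle the transitivity step.

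Reflexivity and symmetry are immediate. For reflexivity of $[\equiv]$, given any orbit $\tau$ of $Q$, pick any $q \in \tau$; then $q \equiv q$ witnesses $\tau\ [\equiv]\ \tau$. For symmetry, if $q_1 \in \tau_1$ and $q_2 \in \tau_2$ witness $\tau_1\ [\equiv]\ \tau_2$ via $q_1 \equiv q_2$, then $q_2 \equiv q_1$ (by symmetry of $\equiv$) witnesses $\tau_2\ [\equiv]\ \tau_1$. I would dispatch both properties in a single sentence.

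The only real work is transitivity, and this is where I expect the main (but still modest) difficulty. Assume $\tau_1\ [\equiv]\ \tau_2$ and $\tau_2\ [\equiv]\ \tau_3$, witnessed by $q_1 \equiv q_2$ with $q_1 \in \tau_1, q_2 \in \tau_2$, and by $q_2' \equiv q_3$ with $q_2' \in \tau_2, q_3 \in \tau_3$. The subtlety is that the two witnesses lying in $\tau_2$ need not coincide, so we cannot apply transitivity of $\equiv$ directly. Since $q_2$ and $q_2'$ lie in the same orbit $\tau_2$, there exists an automorphism $\pi$ of the integer atoms with $q_2 \cdot \pi = q_2'$. Equivariance of $\equiv$ transports $q_1 \equiv q_2$ to
\[
q_1 \cdot \pi \;\equiv\; q_2 \cdot \pi \;=\; q_2',
\]
and then transitivity of $\equiv$ applied to $q_1 \cdot \pi \equiv q_2'$ and $q_2' \equiv q_3$ yields $q_1 \cdot \pi \equiv q_3$. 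Since orbits are stable under automorphisms, $q_1 \cdot \pi \in \tau_1$, so this is a witness for $\tau_1\ [\equiv]\ \tau_3$.

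I would conclude by remarking that the argument uses only equivariance of $\equiv$ together with the fact that $\tau_2$ is a single orbit; in particular it is independent of the choice of atoms. Without equivariance the relation $[\equiv]$ need not even be well-defined in a useful way, so this hypothesis is genuinely necessary.
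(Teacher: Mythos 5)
Your proof is correct and takes essentially the same approach as the paper's: reflexivity and symmetry are immediate, and transitivity is handled by using an automorphism of the atoms to align the two witnesses lying in the middle orbit $\tau_2$, then invoking equivariance and transitivity of $\equiv$. The only (immaterial) difference is that you transport the first witness pair by $\pi$ whereas the paper transports the second.
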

\begin{proof} The only nontrivial part, transitivity, follows from equivariance, as shown below. Suppose that
	\begin{align*}
		\tau_1 \ [\equiv]\ \tau_2 \qquad \mbox{and}\qquad \tau_2 \ [\equiv]\ \tau_3
	\end{align*}
	hold for three orbits $\tau_1,\tau_2,\tau_3$. This means that 
	\begin{align*}
		q_1 \equiv q_2 \qquad \mbox{and}\qquad p_1 \equiv p_2 \qquad \mbox{ for some }q_1 \in \tau_1, q_2,p_2 \in \tau_2, p_2 \in \tau_3.
	\end{align*}
	Sincd  $q_2$ and $p_1$ are in the same orbit, there must be some $\pi$ such that $p_1 \cdot \pi = q_2$. By equivariance of $\equiv$, we have $p_1 \cdot \pi \equiv p_2 \cdot \pi$. Therefore, by transitivity of $\equiv$, we have $q_1 \equiv p_2 \cdot \pi$, and therefore $\tau_1\ [\equiv]\ \tau_3$.
\qed
\end{proof}

\newcommand{\equivquotient}[1]{#1/\!\!\equiv}
We denote by $\equivquotient Q$ the set of equivalence classes of $Q$ under $\equiv$.
Consider the quotient mapping
\begin{align*}
	f : Q \to \equivquotient Q
\end{align*}
which maps an element of $q$ to its equivalence class under $\equiv$, denoted by $[q]_\equiv$. It is not difficult to see that 
$f$ is an equivariant function. As an image of an orbit-finite set under an equivariant function, the quotient $\equivquotient Q$  is orbit-finite. Also, it is not difficult to see that the orbits of $\equivquotient Q$ are in one-to-one correspondence to equivalence classes of the equivalence type $[\equiv]$.



So far we have defined the equivalence type of $\equiv$. This does not yet determine the equivalence relation $\equiv$. The missing information is: what are the characteristics of the orbits in the quotient $\equivquotient Q$, and how does the quotient function identify elements of $Q$.

\begin{itemize}
	\item Let $\Sigma$ be an equivalence class of $[\equiv]$,  corresponding to an orbit  of $\equivquotient Q$. Define $\charm(\Sigma) \in \Nat$  to be the unique number such that the orbit of $\equivquotient Q$ that corresponds to $\Sigma$ is isomorphic to $\Int_{\charm_\Phi(\Sigma)}$.
	\item Let $\Sigma$ be an equivalence class of $[\equiv]$, and let  $\tau,\sigma \in \Sigma$ be orbits of $Q$.  The images $f(\tau)$ and $f(\sigma)$ are equal, and isomorphic to $\Int_{\charm_\Phi(\Sigma)}$. Let $0_\tau, 0_\sigma$ be the copies of $0$ in the  orbits $\tau,\sigma$.  Define 
	\begin{align*}
		\diff(\tau,\sigma) \eqdef f(0_\sigma) - f(0_\tau) \in   \Int_{\charm(\Sigma)}.
	\end{align*}
	In principle, $\diff(\tau,\sigma)$ is an element\footnote{In the definition of $\diff(\tau,\sigma)$, we implicitly use some  isomorphism between $\Int_{\charm(\Sigma)}$ and  the orbit $f(\tau)=f(\sigma)$. There are, however, many possible isomorphisms. It can easily be checked that the difference $f(0_\sigma)-f(0_\tau)$ does not depend on the isomorphism.
	} of $\Int_{\charm(\Sigma)}$. However, it will be convenient to treat it as an integer.  That is why, we think of   $\diff(\tau,\sigma)$ as any integer in the set
	\begin{align*}
		   \diff(\tau,\sigma) + \charm(\Sigma) \cdot \Int.
	\end{align*}
	If there will be a need to make $\diff(\tau,\sigma)$ unique, we can choose it to be the smallest positive integer in the set above.
\end{itemize}


Define  \emph{signature} of the equivariant equivalence relation $\equiv$ to be the triple
\begin{align*}
	\Phi_\equiv \quad\eqdef\quad ([\equiv], \diff, \charm),
\end{align*}
which contains all the information defined above.  This defines a mapping 
\begin{align} \label{eq:trans}
\equiv \ \ \mapsto \ \ \Phi_\equiv .
\end{align}
We claim that the signature determines $\equiv$ uniquely.  To prove this, we 
claim that there is an inverse transformation
\begin{align} \label{eq:inversetrans}
\Phi \ \ \mapsto \ \ \equiv_\Phi
\end{align}
which maps a signature to an equivariant congruence.

Of course, the input to the inverse transformation needs to be consistent. This consistency condition is captured by the notion of an \emph{equivalence signature}. An equivalence signature is  a triple
\begin{align*}
	\Phi = (\sim,\diff,\charm)
\end{align*}
where  $\sim$ is an equivalence relation on orbits of $Q$ and $\diff, \charm$ are vectors
\begin{eqnarray*}
	\diff &:& \set{(\tau,\sigma) \in orbits(Q) \times orbits(Q): \tau \sim \sigma} \to \Int\\
	\charm &:& classes(\sim) \to \Nat.
\end{eqnarray*}	
such that the following consistency conditions hold  for every equivalence class $\Sigma$ of $\sim$:
\begin{itemize}
\item for all $\tau_1,\tau_2,\tau_3 \in \Sigma$:
\begin{eqnarray*}
			\diff(\tau_1,\tau_2) + \diff(\tau_2, \tau_3) = \diff(\tau_1,\tau_3) \mod \charm(\Sigma) .
\end{eqnarray*}	
\item for all $\tau \in \Sigma$, 
\begin{eqnarray*}
k = 0 \mod \charm(\Sigma), 
\end{eqnarray*}
where $\tau$ is isomorphic to $\Int_k$.
\end{itemize}
The first condition above implies, in particular, that 
\begin{align*}
	\diff(\tau,\tau)= 0  \mod \charm(\Sigma)
\qquad\mbox{and}\qquad	\diff(\tau,\sigma) = - \diff(\sigma,\tau)  \mod \charm(\Sigma) .
\end{align*}
In the sequel we will write $\charm(\tau)$ instead of $\charm(\Sigma)$, when $\tau \in \Sigma$.
Clearly, the mapping~\eqref{eq:trans} is surjective onto equivalence signatures.

The following proposition shows that an equivariant equivalence relation on $Q$ is completely described by its signature.
(Note however that there are in general many different signatures defining the same equivalence, because $\diff(\tau,\sigma)$ is only determined modulo 
$\charm(\tau) = \charm(\sigma)$.)

\begin{proposition}\label{prop:equivalence-signatures}
	The exists a mapping~\eqref{eq:inversetrans} that maps equivalence signatures to equivariant equivalences,
	that is an inverse of the mapping~\eqref{eq:trans}, namely:
		\begin{align*}
		\equiv_{\Phi_\equiv} \ \ = \ \ \equiv \qquad \qquad \text{ for any equivariant equivalence } \equiv .
	\end{align*}
\end{proposition}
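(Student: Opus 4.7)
The plan is to construct the inverse map \eqref{eq:inversetrans} explicitly, then verify both that its output is always an equivariant equivalence relation and that $\equiv_{\Phi_\equiv} = \equiv$ for every $\equiv$. Given a signature $\Phi = (\sim, \diff, \charm)$, fix for each orbit $\tau$ of $Q$ an identification $\tau \cong \Int_{k_\tau}$ supplied by Lemma~\ref{lem:single-orbit-sets}, and for $q_j \in \tau_j$ ($j = 1,2$) let $i_j \in \Int$ be an integer representative. Declare
\[
q_1 \equiv_\Phi q_2 \quad \iff \quad \tau_1 \sim \tau_2 \ \text{ and } \ i_2 - i_1 \equiv \diff(\tau_1,\tau_2) \pmod{\charm(\Sigma)},
\]
where $\Sigma$ is the common $\sim$-class of $\tau_1$ and $\tau_2$. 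The consistency requirement that $\charm(\Sigma)$ divides each $k_{\tau_j}$ makes the congruence insensitive to the choice of representatives $i_j$, and the ambiguity in $\diff(\tau_1,\tau_2) \in \Int_{\charm(\Sigma)}$ is absorbed into the congruence as well.

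Reflexivity, symmetry and transitivity of $\equiv_\Phi$ then transcribe directly into the three algebraic axioms $\diff(\tau,\tau) \equiv 0$, $\diff(\tau_1,\tau_2) \equiv -\diff(\tau_2,\tau_1)$ and $\diff(\tau_1,\tau_2) + \diff(\tau_2,\tau_3) \equiv \diff(\tau_1,\tau_3)$ modulo $\charm(\Sigma)$. Equivariance is immediate: an atom-automorphism $\pi \in \Int$ acts on each orbit by translating integer representatives by the same amount, so it fixes the orbit labels $\tau_j$ and preserves the difference $i_2 - i_1$. In particular $\equiv_\Phi$ is a bona fide equivariant equivalence, and the map $\Phi \mapsto \equiv_\Phi$ is well-defined on the set of equivalence signatures.

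For the inverse identity, observe that $q_1 \equiv q_2$ iff the quotient map $f : Q \to \equivquotient{Q}$ satisfies $f(q_1) = f(q_2)$. Restricted to any orbit $\tau \in \Sigma$, $f$ is an equivariant function into a single orbit of the quotient, so by Corollary~\ref{cor:equivariant-function-between-single-orbits} together with the classification of one-orbit sets (Lemma~\ref{lem:single-orbit-sets}), $f|_\tau : \Int_{k_\tau} \to \Int_{\charm(\Sigma)}$ must act as $i \mapsto f(0_\tau) + i$. Consequently $f(q_1) = f(q_2)$ unfolds to $i_2 - i_1 \equiv f(0_{\tau_2}) - f(0_{\tau_1}) \equiv \diff(\tau_1,\tau_2) \pmod{\charm(\Sigma)}$, matching the definition of $\equiv_{\Phi_\equiv}$.

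The main obstacle is bookkeeping around non-canonical choices: the isomorphism $\tau \cong \Int_k$, the integer representative of an element of $\Int_k$, and the integer representative of $\diff(\tau,\sigma) \in \Int_{\charm(\Sigma)}$ are all non-unique. The signature axioms were set up precisely so that every derived quantity is well-defined modulo $\charm(\Sigma)$, so once this invariance is established each of the four verifications reduces to a short arithmetic check.
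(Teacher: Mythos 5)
The paper states Proposition~\ref{prop:equivalence-signatures} without proof, so there is no argument of the authors' to compare against; your explicit construction of $\equiv_\Phi$ is clearly the intended content, and it is essentially correct: the consistency axioms give well-definedness, reflexivity, symmetry and transitivity; equivariance follows because a translation acts on every orbit $\Int_k$ by shifting representatives; and the recovery of $\equiv$ from $\Phi_\equiv$ follows from the fact that an equivariant map between one-orbit sets $\Int_{k_\tau}\to\Int_{\charm(\Sigma)}$ is forced to be $i\mapsto f(0_\tau)+i$. The one slip is a sign in the last step: with the paper's convention $\diff(\tau_1,\tau_2)=f(0_{\tau_2})-f(0_{\tau_1})$, the equation $f(0_{\tau_1})+i_1=f(0_{\tau_2})+i_2$ unfolds to $i_1-i_2\equiv \diff(\tau_1,\tau_2)\pmod{\charm(\Sigma)}$, not $i_2-i_1$ as you wrote; so either flip the orientation in your defining clause for $\equiv_\Phi$ to $i_1-i_2\equiv\diff(\tau_1,\tau_2)$, or note that the two conventions differ by replacing $\diff(\tau_1,\tau_2)$ with $\diff(\tau_2,\tau_1)=-\diff(\tau_1,\tau_2)$ and yield the same relation class. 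This is a convention mismatch rather than a logical gap; after fixing it the proof is complete and supplies exactly what the paper leaves implicit.
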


\subsection{Which signatures describe congruences}
\label{sec:which-signatues-describe-congruences}

\begin{proposition}\label{prop:define-by-equalities}
	Let $\sim$ be an equivalence relation on orbits of $Q$, and let 
\begin{align*}
	f : Q \to Q
\end{align*}
	be a  semilinear function. There set
	\begin{align*}
		X_{f, \sim} = \set{(\diff,\charm) : \mbox{  } (\sim, \diff, \charm) \text{ is an equivalence signature and} \\
		\equiv_{(\sim,\diff,\charm)} \mbox{ is a nontrivial congruence for $f$}}
	\end{align*}
	is definable in EPAD.
\end{proposition}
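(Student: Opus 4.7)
The set $X_{f,\sim}$ is the conjunction of three kinds of conditions on the parameters $(\diff,\charm)$: (a) that $(\sim,\diff,\charm)$ is a valid equivalence signature, (b) that $\equiv_{(\sim,\diff,\charm)}$ is a congruence for $f$, and (c) that it is nontrivial. The plan is to express each as an EPAD formula and conjoin them. Conditions (a) and (c) are essentially immediate: the signature conditions from Section~\ref{sec:the-signature-of-an-equivariant-equivalence-relation} --- the cocycle identity $\charm(\Sigma)\mid \diff(\tau_1,\tau_2)+\diff(\tau_2,\tau_3)-\diff(\tau_1,\tau_3)$ and $\charm(\Sigma)\mid k_\tau$ for each $\tau\in\Sigma$ --- are quantifier-free divisibility predicates on linear expressions in the parameters, while nontriviality holds either because some $\sim$-class has more than one orbit (a Boolean property of the fixed input) or because some $\charm(\tau)$ is a proper divisor of $k_\tau$, which is a disjunction of Presburger atoms.

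For (b), the main work, I would exploit the piecewise-affine description available for any semilinear function: restricted to each orbit $\tau$ of $Q$, the function $f$ is given by a partition of $\tau$ into finitely many Presburger-definable pieces $P_1,\ldots,P_r$ such that $f(P_s)$ lies inside a single target orbit $\tau^s$ and $f|_{P_s}(q)=\alpha_s q+\beta_s$ for fixed integers $\alpha_s,\beta_s$. The congruence condition then decomposes into a finite conjunction of local implications, one per pair of pieces $P_s\subseteq\tau,\ P_t\subseteq\sigma$ with $\tau\sim\sigma$: for all $q\in P_s$ and $p\in P_t$ with $\charm(\tau)\mid p-q-\diff(\tau,\sigma)$, one must have both $\tau^s\sim\sigma^t$ and $\charm(\tau^s)\mid \alpha_t p+\beta_t-\alpha_s q-\beta_s-\diff(\tau^s,\sigma^t)$. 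Parametrizing each piece as a finite union of arithmetic progressions further reduces each such implication to the schematic form: for all $\lambda,\mu\in\Z$, $c\mid A+B\lambda+C\mu\Rightarrow c'\mid A'+B'\lambda+C'\mu$, where $c=\charm(\tau)$ and $c'=\charm(\tau^s)$ are parameters, $A,A'$ are affine in $\diff$, and $B,B',C,C'$ are fixed integers derived from the semilinear description of $f$.

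The main obstacle is then eliminating the universal over $\lambda,\mu$. The hypothesis defines a shifted sublattice of $\Z^2$, so the implication amounts to containment of this sublattice in the conclusion sublattice --- equivalent, via Bezout-style existential witnesses for the generators, to a finite conjunction of divisibilities of the form $c'\mid\cdot$ on EPAD-expressible terms. The failure branch where $\tau^s\not\sim\sigma^t$ requires the hypothesis set to be empty, namely $\gcd(B,C,c)\nmid (A-\diff(\tau,\sigma))$; since $B,C$ are fixed, $\gcd(B,C,c)$ assumes only finitely many values (divisors of $\gcd(B,C)$), so we write the condition as a finite disjunction of EPAD atoms using the identity $g\nmid x \iff \exists r.\,0<r<g\wedge g\mid x-r$. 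The delicate point --- and why divisibility is needed beyond plain Presburger --- is that the moduli $c,c'$ are themselves parameters, so residue classes cannot be enumerated upfront; only because the coefficients $B,B',C,C'$ are fixed constants is the parameter-dependent $\gcd$ tamable by a bounded case split, while the remaining divisibility by the parameter itself is handled by the EPAD predicate $\mid$ directly.
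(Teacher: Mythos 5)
Your overall plan coincides with the paper's: split $X_{f,\sim}$ into signature-validity, nontriviality, and the congruence condition; observe that the first two are Boolean combinations of divisibility atoms in the parameters; and reduce the congruence condition, via the piecewise-affine structure of the semilinear $f$ on each orbit, to finitely many universally quantified implications between divisibilities of affine forms whose moduli $\charm(\cdot)$ are themselves parameters. The genuine divergence is in how the universal quantifier is eliminated. The paper first proves Lemma~\ref{lem:congruence-char}: since $\equiv_\Phi$ is generated by the pairs $(\tau,i)\equiv(\tau,i+\charm(\tau))$ and $(\tau,i)\equiv(\sigma,i+\diff(\tau,\sigma))$, it suffices to check that $f$ respects these generators, which is a \emph{one}-variable universal over $i$; per pair of affine pieces the constraint domain is a single arithmetic progression, and Lemma~\ref{lem:quantifier-elimination} replaces the universal by an existential check at two consecutive elements (an affine form vanishing mod $c$ at two consecutive points of a progression vanishes mod $c$ on all of it). You skip the generator reduction and attack the full two-variable condition $q\equiv p\Rightarrow f(q)\equiv f(p)$ directly, eliminating $\forall\lambda\,\forall\mu$ by a lattice-containment argument with a bounded case split on $\gcd(B,C,c)$. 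Your key observation --- that the coefficients of the quantified variables are fixed constants, so the parameter-dependent gcd ranges over a finite set and the residual divisibilities are genuine EPAD atoms --- is the same insight that drives the paper's two-point lemma, just packaged in dimension two.

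Your route can be made to work, but as written it has two soft spots that the paper's generator reduction is specifically designed to avoid. First, the pieces of a semilinear partition of $\Z$ are finite or \emph{one-sided} arithmetic progressions, so the quantifier domain is a box or a quadrant in $(\lambda,\mu)$, not $\Z^2$; containment of the full hypothesis coset in the conclusion coset is then strictly stronger than the required implication (the box may meet the hypothesis lattice in a single point where the conclusion happens to hold), so your criterion would carve out a proper subset of $X_{f,\sim}$ and make the eventual decision procedure unsound in the ``no congruence exists'' direction. This is repairable --- enumerate finite pieces pointwise and re-run the generator analysis inside a quadrant --- but it needs to be said; the paper's (A1)/(A2) formulation (``at least one element'', ``at least two consecutive elements'') is exactly this care. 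Second, the degenerate modulus $\charm(\Sigma)=0$ (quotient orbit $\Z$), where divisibility means equality and the hypothesis lattice drops rank, breaks the index computation behind your generator set and needs a separate branch; the two-consecutive-points argument covers it for free since an affine form vanishing at two points vanishes identically. Neither issue is fatal, but both must be handled before your elimination step is a proof.
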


Before proving Proposition~\ref{prop:define-by-equalities}, we show how it implies Theorem~\ref{thm:congruence-decide}.

\begin{proof}[of Theorem~\ref{thm:congruence-decide}]
		Recall that Theorem~\ref{thm:congruence-decide} says that one can decide if there exists some nontrivial equivariant equivalence relation on $Q$, which respects $F \subseteq Q$, and which is a congruence  with respect to semilinear operations $\delta_1,\ldots,\delta_n$.
	By Proposition~\ref{prop:equivalence-signatures}, every  equivariant equivalence relation on  $Q$ is of the form $\equiv_\Phi$, for some equivalence signature
	\begin{align*}
		\Phi = (\sim,\diff,\charm).
	\end{align*}
 Since the set $F$ is equivariant, it is a union of orbits $F= \tau_1 \cup \cdots \cup \tau_k$.  This means that $\equiv_\Phi$ respects $F$ if and only if the equivalence relation $\sim$ respects the set $\set{\tau_1,\ldots,\tau_k}$.
Recall the sets $X_{f,\sim}$ from Proposition~\ref{prop:define-by-equalities}. There is a nontrivial congruence on $Q$ which respects $F$ if and only if the following set is nonempty
\begin{align*}
	\bigcup_{\text{$\sim$ respects $\tau_1,\ldots,\tau_k$}} \qquad \bigcap_{i \in \set{1,\ldots,n}} \qquad X_{\delta_i, \sim}.
\end{align*}
As EPAD admits positive boolean combinations, it follows from Proposition~\ref{prop:define-by-equalities} that the above set is definable
in EPAD. We can therefore invoke  Theorem~\ref{thm:epad} to test if the set is nonempty.
\qed
\end{proof}

The rest of Section~\ref{sec:which-signatues-describe-congruences} is devoted to proving Proposition~\ref{prop:define-by-equalities}. 
Fix the relation $\sim$ on orbits of $Q$ for the rest of this section.

We will identify an element of $Q$ with a pair $(\tau,i)$, where $\tau$ is an orbit and $i$ is an integer. When the orbit $\tau$ has characteristic $k > 0$, then this representation is many-to-one, because the pairs $(\tau,i)$ and $(\tau,i+k)$ represent the same element.

\begin{lemma}\label{lem:congruence-char}
	Let $\Phi=(\sim,\diff,\charm)$ be an equivalence signature. The  equivalence  $\equiv_\Phi$ is a congruence with respect to a function $f$ if and only if 
	\begin{equation}\label{eq:congruence-needed-1}
	\forall i \in \Int \qquad 	f(\tau,i) \equiv_\Phi f(\tau,i+\charm(\tau)) \qquad \qquad \ \  \mbox{ for every orbit $\tau$}
	\end{equation}
	\begin{equation}\label{eq:congruence-needed-2}
	\forall i \in \Int \qquad 	f(\tau,i) \equiv_\Phi f(\sigma,i+\diff(\tau,\sigma))\qquad \mbox{ for every orbits $\tau \sim \sigma$}
	\end{equation}
\end{lemma}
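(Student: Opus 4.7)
The plan is to prove both implications by the observation that the equivalence $\equiv_\Phi$ is generated (under reflexivity, symmetry and transitivity) by precisely the atomic equivalences appearing on the right-hand sides of~\eqref{eq:congruence-needed-1} and~\eqref{eq:congruence-needed-2}. Once this is clear, the forward direction is routine and the backward direction reduces to walking along a chain of such atomic equivalences and applying $f$ at each link.

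For the forward direction, I would assume $\equiv_\Phi$ is a congruence for $f$, and verify that the two kinds of pairs
\[
(\tau,i),\ (\tau,i+\charm(\tau)) \qquad\text{and}\qquad (\tau,i),\ (\sigma,i+\diff(\tau,\sigma)) \text{ with } \tau\sim\sigma
\]
are themselves $\equiv_\Phi$-equivalent to begin with. The first pair is equivalent because the orbit of $\equivquotient Q$ corresponding to the class $\Sigma\ni\tau$ has characteristic $\charm(\Sigma)=\charm(\tau)$, so shifting the second coordinate by $\charm(\tau)$ leaves the image in the quotient unchanged. The second pair is equivalent directly from the definition of $\diff(\tau,\sigma)$ as the offset between $f(0_\tau)$ and $f(0_\sigma)$ inside the common image orbit. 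Applying the congruence property of $\equiv_\Phi$ with respect to $f$ then yields exactly~\eqref{eq:congruence-needed-1} and~\eqref{eq:congruence-needed-2}.

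For the backward direction, assume both conditions hold and pick arbitrary $\equiv_\Phi$-equivalent elements $(\tau,i)\equiv_\Phi(\sigma,j)$; the goal is to show $f(\tau,i)\equiv_\Phi f(\sigma,j)$. Invoking the explicit characterization of $\equiv_\Phi$ supplied by Proposition~\ref{prop:equivalence-signatures}, such an equivalence forces $\tau\sim\sigma$ together with a relation of the form $j \equiv i + \diff(\tau,\sigma) \pmod{\charm(\sigma)}$, so that $j = i + \diff(\tau,\sigma) + m\cdot\charm(\sigma)$ for some integer $m$. I would then produce the chain
\[
f(\tau,i) \;\equiv_\Phi\; f(\sigma,i+\diff(\tau,\sigma)) \;\equiv_\Phi\; f(\sigma,i+\diff(\tau,\sigma)+\charm(\sigma)) \;\equiv_\Phi\; \cdots \;\equiv_\Phi\; f(\sigma,j),
\]
whose first link is an instance of~\eqref{eq:congruence-needed-2} and whose subsequent $|m|$ links are instances of~\eqref{eq:congruence-needed-1} (applied in reverse, via symmetry of $\equiv_\Phi$, when $m<0$). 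Transitivity of $\equiv_\Phi$ closes the argument.

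I do not foresee a serious obstacle. The only care required is to pin down the precise characterization of $\equiv_\Phi$ arising from Proposition~\ref{prop:equivalence-signatures}; modulo that, the lemma is essentially the statement that the equivalences in~\eqref{eq:congruence-needed-1} and~\eqref{eq:congruence-needed-2} form a generating family for $\equiv_\Phi$, and that preservation of a generating family under $f$ is the same as preservation of the entire equivalence relation.
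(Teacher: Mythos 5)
Your proposal is correct and follows essentially the same route as the paper: the paper's proof also rests on the observation that $\equiv_\Phi$ is the smallest equivalence relation generated by the pairs $(\tau,i)\equiv_\Phi(\tau,i+\charm(\tau))$ and $(\tau,i)\equiv_\Phi(\sigma,i+\diff(\tau,\sigma))$, so that preserving the generators suffices. Your version merely spells out the chain-of-generators argument that the paper leaves implicit.
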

\begin{proof}
	The left-to-right implication is immediate. We only do the right-to-left implication. 
	A short argument is that the relation $\equiv_\Phi$ is the smallest equivalence relation generated by the pairs
	\begin{align*}
		(\tau,i) \equiv_\Phi (\tau,i+\charm(\tau)) \qquad (\tau,i) \equiv_\Phi (\sigma,i + \diff(\tau,\sigma))
	\end{align*}
	ranging over integers $i \in \Int$ and orbits $\tau \sim \sigma$. Therefore, if $f$ respects the generators, it will also respect the whole equivalence relation.	%
		%
		%
\qed
\end{proof}

The key technical lemma is Lemma~\ref{lem:needed-1}, stated below.
\begin{lemma}\label{lem:needed-1}
	Fix $f$, $\sim$ and orbits $\tau,\sigma$. 
	The set	
	\begin{equation}
		\label{eq:lem-needed}
			X_{f, \sim, \tau, \sigma} \ = \ \{ (\diff, \charm, \Delta) \  : \ \forall i \in \Int \quad 	f(\tau,i) 
			\equiv_{(\sim, \diff, \charm)} f(\sigma,i+\Delta) \}
	\end{equation}
	is definable in EPAD.
\end{lemma}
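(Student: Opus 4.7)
The plan is to unfold the universal quantifier over $i$ into a positive boolean combination of EPAD atoms by exploiting the piecewise-affine structure of a semilinear function restricted to a single orbit. First I would put $f$ into a normal form on each of the two input orbits: since $f$ is semilinear, its restriction to $\tau$ (which is $\Int$ or $\Int_{\charm(\tau)}$) factors, on a finite Presburger partition of the domain into pieces $S_1,\ldots,S_r\subseteq\Int$, as $f(\tau,i)=(\rho_j,\alpha_j(i))$ for a fixed output orbit $\rho_j$ and a fixed affine map $\alpha_j(i)=a_j i+b_j$ on $S_j$. Perform the analogous decomposition for $\sigma$ to obtain pieces $T_1,\ldots,T_{r'}$, orbits $\rho'_k$, and affine maps $\beta_k(i)=a'_k i+b'_k$. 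All this data is computable from $f$ and independent of $(\diff,\charm,\Delta)$.

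Next, case-split on the pieces. Setting $U_{j,k}(\Delta)=S_j\cap(T_k-\Delta)$, the universal condition over $i$ becomes the conjunction, over all $(j,k)$, of the statement that every $i\in U_{j,k}(\Delta)$ satisfies $(\rho_j,\alpha_j(i))\equiv_{(\sim,\diff,\charm)}(\rho'_k,\beta_k(i+\Delta))$. Unfolding the equivalence, each $(j,k)$ yields either $U_{j,k}(\Delta)=\emptyset$, or the pair of conditions $\rho_j\sim\rho'_k$ and
\[
(a'_k-a_j)\,i+(a'_k\Delta+b'_k-b_j)-\diff(\rho_j,\rho'_k)\ \equiv\ 0 \pmod{\charm(\rho_j)}
\]
for every $i\in U_{j,k}(\Delta)$. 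The predicate $\rho_j\sim\rho'_k$ has a fixed boolean value, while emptiness of $U_{j,k}(\Delta)$ is an ordinary Presburger condition in $\Delta$.

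To eliminate the inner $\forall i$, observe that $U_{j,k}(\Delta)$ is Presburger-definable in $i$ with $\Delta$ as a parameter, so it is a finite union of arithmetic progressions whose common differences $d$ are absolute constants (depending only on the periods of $S_j$ and $T_k$), while their starting points $i_0(\Delta)$ depend Presburger-linearly on $\Delta$. Writing the affine expression above as $c_1\, i + c_2$ with $c_1=a'_k-a_j$ and $c_2=a'_k\Delta+b'_k-b_j-\diff(\rho_j,\rho'_k)$, the condition that $\charm(\rho_j)$ divides $c_1 i + c_2$ throughout a progression $\{i_0+dn:n\in\Nat\}$ is equivalent to
\[
\charm(\rho_j)\mid c_1 d \qquad \text{and} \qquad \charm(\rho_j)\mid c_1\, i_0(\Delta)+c_2,
\]
each of which is an EPAD divisibility atom because $\charm(\rho_j)$ is one of the quantified variables of the signature. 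Assembling the pieces, $X_{f,\sim,\tau,\sigma}$ is a positive boolean combination of EPAD atoms with free variables among $(\diff,\charm,\Delta)$, hence itself EPAD-definable.

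The principal obstacle is the rigorous derivation of the $U_{j,k}(\Delta)$ normal form, together with careful handling of degenerate cases: orbits with $\charm(\rho_j)=0$, where the congruence collapses to an equality and so forces $c_1=0$; progressions that reduce to a single point or to a finite set; and the mild boundary arithmetic caused by starting inequalities $i\geq i_0$. None of these are deep individually, but organising them into a single positive boolean combination of EPAD atoms is where the bookkeeping is required.
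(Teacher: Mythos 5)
Your proposal is correct and follows essentially the same route as the paper: decompose $f$ on each input orbit into affine pieces over arithmetic progressions, case-split over pairs of pieces via the intersections $S_j\cap(T_k-\Delta)$, and eliminate the inner universal quantifier over the resulting progression. Your formulation of the elimination step (divisibility at the first point plus $\charm(\rho_j)\mid c_1 d$ for the step) is equivalent to the paper's Lemma~\ref{lem:quantifier-elimination}, which instead checks the congruence at two consecutive elements of the progression.
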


Before proving the lemma, we show how, together with Lemma~\ref{lem:congruence-char}, it implies Proposition~\ref{prop:define-by-equalities}.
\begin{proof}[of Proposition~\ref{prop:define-by-equalities}]
In order to prove Proposition~\ref{prop:define-by-equalities}, it is sufficient to establish two conditions.
First, we claim that for every choice of $\sim$, the set
\[
\set{(\diff,\charm) : (\sim, \diff, \charm) \text{ is an equivalence signature} }
\]
is definable in EPAD.
Then, by Lemma~\ref{lem:congruence-char}, it suffices to show that for every choice of $\sim$, each of the  properties~\eqref{eq:congruence-needed-1},~\eqref{eq:congruence-needed-2} can be defined in EPAD. This is exactly what Lemma~\ref{lem:needed-1} says. 
\qed
\end{proof}

 The key obstacle for expressing~\eqref{eq:lem-needed} in EPAD is that it has the universal quantifier $\forall i$, which is not allowed in EPAD. The is idea to use a technique of quantifier elimination, which is presented in the following lemma about arithmetic progressions. 
A finite arithmetic progression is a set  $a + p \cdot \set{0,\ldots,k}$. An infinite arithmetic progression is a set $a+ p \cdot \Nat$. Both the base $a$ and the period $p$ can be negative.

\begin{lemma}\label{lem:quantifier-elimination}
	Fix an equivalence signature $\Phi = (\sim, \diff, \charm)$.
	Let $f_1,f_2 : \Int \to Q$ be affine functions defined by
	\begin{align*}
		f_1(x) = (\tau_1,a_1 \cdot x + b_1) \qquad f_2(x) = (\tau_2,a_2 \cdot x + b_2),
	\end{align*}
	for orbits $\tau_1 \sim \tau_2$ and coefficients $a_1,b_1,a_2,b_2 \in \Int$.
	Let $X \subseteq \Int$  be a finite or infinite arithmetic progression. For any  two   consecutive elements $x_0,x_1 \in X$,
	\begin{align*}
		\forall x \in X \quad f_1(x)  \equiv_\Phi f_2(x)\qquad \mbox{iff} \qquad \bigwedge_{x \in \set{x_0,x_1}} f_1(x) \equiv_\Phi f_2(x)
	\end{align*}
\end{lemma}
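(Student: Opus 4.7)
The plan is to reduce the semantic statement to a purely arithmetic one about a single linear congruence, and then observe that a linear congruence is constant on an arithmetic progression iff it agrees on any two consecutive points.

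First I would unfold the definition of $\equiv_\Phi$ provided by Proposition~\ref{prop:equivalence-signatures}. Since $\tau_1 \sim \tau_2$, they belong to a common equivalence class $\Sigma$ of $\sim$; write $c = \charm(\Sigma)$ and $d = \diff(\tau_1,\tau_2)$. The generators identifying $(\tau, i)$ with $(\tau, i + \charm(\tau))$ and $(\tau, i)$ with $(\sigma, i + \diff(\tau,\sigma))$ then yield
\[
f_1(x) \equiv_\Phi f_2(x) \quad \iff \quad (a_2 - a_1)\, x + (b_2 - b_1) \equiv d \pmod{c}.
\]
Let $\alpha = a_2 - a_1$ and $\beta = b_2 - b_1$, so the statement $\forall x \in X \; f_1(x) \equiv_\Phi f_2(x)$ becomes a single family of linear congruences $\alpha x + \beta \equiv d \pmod c$ indexed by $x \in X$.

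The left-to-right implication is immediate. For the converse, I would take two consecutive elements $x_0, x_1 \in X$, so $x_1 = x_0 + p$ where $p$ is the common difference of $X$. Subtracting the congruence at $x_0$ from the one at $x_1$ gives $\alpha p \equiv 0 \pmod c$. Now every element of $X$ has the form $x = x_0 + k p$ for some integer $k$, and therefore
\[
\alpha x + \beta \;=\; (\alpha x_0 + \beta) + k \cdot \alpha p \;\equiv\; d + 0 \;\equiv\; d \pmod c,
\]
so the congruence holds throughout $X$, as required.

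There is essentially no obstacle here once the right reformulation is in place: the whole lemma is driven by the elementary fact that an affine function modulo $c$ is determined on the progression $x_0 + p\Z$ by its values at $x_0$ and $x_0 + p$. Edge cases --- $X$ empty or a singleton --- are either vacuous or collapse the conjunction to a single conjunct, and the argument is unaffected. The role of this lemma in the larger proof of Proposition~\ref{prop:define-by-equalities} is precisely to replace the problematic universal quantifier $\forall i$ in~\eqref{eq:lem-needed} by a conjunction over a bounded number of witness points, which is what will eventually let us express $X_{f,\sim,\tau,\sigma}$ in EPAD.
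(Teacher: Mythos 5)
Your proposal is correct and follows essentially the same route as the paper: both reduce $f_1(x)\equiv_\Phi f_2(x)$ to a single affine congruence modulo $\charm(\tau_1)$ and then observe that an affine function vanishing modulo $c$ at two consecutive points $x_0$, $x_1=x_0+p$ of the progression vanishes at every $x_0+kp$, hence on all of $X$. Your version merely makes the step $\alpha p\equiv 0\pmod c$ explicit and notes the degenerate cases, which the paper leaves implicit.
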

\begin{proof}
	We only prove the nontrivial right-to-left implication. 
	By unraveling the definition of functions $f_1$ and $f_2$,  $f_1(x) \equiv_\Phi f_2(x)$ is equivalent to 
	\begin{align*}
		\alpha(x) = 0 \mod \charm(\tau_1)
	\end{align*}
	for the function $\alpha : \Int \to \Int$ defined by
	\begin{align*}
		\alpha(x) =  a_1 \cdot x + b_1 - \diff(\tau_1,\tau_2) - a_2 \cdot x - b_2.
	\end{align*}
	This is an affine function. Suppose then that 
	\begin{align*}
		\bigwedge_{x \in \set{x_0,x_1}}\alpha(x) = 0 \mod \charm(\tau_1) 
	\end{align*}
	holds. Then it follows that  
	\begin{align*}
		\forall k \in \Int \qquad \alpha(x_0 + k \cdot(x_1-x_0)) = 0 \mod \charm(\tau_1).
	\end{align*}
	In particular, because $x_1$ and $x_0$ are consecutive elements in the progression $X$, it follows that 
	\begin{align*}
		\forall x \in X \qquad \alpha(x) = 0 \mod \charm(\tau_1),
	\end{align*}
	which means that 
	\begin{align*}
		\forall x \in X \quad f_1(x)  \equiv_\Phi f_2(x)
	\end{align*}
as required.
\qed
\end{proof}

\begin{proof}[of Lemma~\ref{lem:needed-1}]
Consider the functions $g,h : \Int \to Q$ defined by 
\begin{align*}
		g(x) = f(\tau,x) \qquad \mbox{and}\qquad h(x) = f(\sigma,x) .
\end{align*}
As $f$ is semilinear, 
we know that there is a partition of $\Int$ into a finite family of arithmetic progressions $\set{X_i}_{i\in I}$  (respectively, $\set{Y_j}_{j\in J}$), and a family of affine functions $\set{g_i : \Int \to Q}_{i \in I}$  (respectively, $\set{h_j : \Int \to Q}_{j \in J}$), such that on arguments from $X_i$, the functions $g_i$ and $g$ coincide (respectively, on arguments from $Y_j$, the functions $h_j$ and $h$ coincide).

Expressed in terms of these partitions and affine functions,  condition~\eqref{eq:lem-needed} from Lemma~\ref{lem:needed-1} becomes
\begin{align*}
	\forall x  \qquad  \bigwedge_{i \in I, j \in J}  \big(x \in X_i \land (x+\Delta \in Y_j)\big)\quad \Rightarrow \quad g_i(x) \equiv_\Phi h_j(x+\Delta)
\end{align*}
(we write shortly $\Phi$ instead of $(\sim, \diff, \charm)$ here and in the sequel)
which is the same as the conjunction, ranging over all choices of $i \in I$ and $j \in J$, of the properties
\begin{equation}
	\label{eq:conjunct-ij}
	\forall x \in (X_i \cap (Y_j - \Delta)) \quad g_i(x) \equiv_\Phi h_j(x+\Delta)
\end{equation}
Therefore, it is sufficient to provide a formula of EPAD for each formula of the form~\eqref{eq:conjunct-ij}.
Let us fix then $i$ and $j$ in the sequel.

As an intersection of two arithmetic progressions, the set $X_i \cap (Y_j - \Delta)$ is an arithmetic progression. The sets $X_i$ and $Y_j$ are known, only $\Delta$ is unknown. Depending on the value of $\Delta$, the progression $X_i \cap (Y_j - \Delta)$ might have zero, one or at least two elements. 
Thanks to Lemma~\ref{lem:quantifier-elimination}, we know that~\eqref{eq:conjunct-ij} is equivalent to the following property:
\begin{enumerate}
	\item[(A1)] If $X_i \cap (Y_j - \Delta)$ has at least one element $x$, then 
	\begin{align*}
		g_i(x) \equiv_\Phi h_j(x+\Delta).
	\end{align*}
	\item[(A2)] If $X_i \cap (Y_j - \Delta)$ has at least two  elements $x_1,x_2$, and they are consecutive, then 
	\begin{align*}
		\bigwedge_{k \in \set{1,2}} g_i(x_k) \equiv_\Phi h_j(x_k+\Delta).
	\end{align*}
\end{enumerate}
We claim that both these properties can be formalized in EPAD. Let us focus on the more difficult property (A2). It is not difficult to see that the set
\begin{eqnarray*}
		cons&=&\set{ (\Delta,x_1,x_2) : \mbox{ $x_1,x_2$ are consecutive elements of $X_i \cap (Y_j - \Delta$)}}
	\end{eqnarray*}
	are Presburger definable. Property (A2) is an implication. The head of the implication, ``if $X_i \cap (Y_j - \Delta)$ has at least two  elements $x_1,x_2$'' is defined by a Presburger formula  $\exists\ x_1 x_2\ cons(\Delta,x_1,x_2)$. The tail of the implication can be seen as the formula:
	\begin{align*}
		\exists \ x_1 x_2 \qquad cons(\Delta,x_1,x_2) \land \bigwedge_{k \in \set{1,2}} g_i(x_k) \equiv_\Phi h_j(x_k+\Delta),
	\end{align*}
	which is definable in EPAD.
	This completes the proof of Lemma~\ref{lem:needed-1}, being the last part of the proof of Theorem~\ref{thm:congruence-decide}.
	\qed
	\end{proof}

\bibliographystyle{plain} \bibliography{bib}

\end{document}